\newtheorem{theorem}{Theorem}
\newtheorem{proposition}[theorem]{Proposition}
\newtheorem{corollary}[theorem]{Corollary}
\theoremstyle{plain}
\theoremstyle{remark}
\theoremstyle{definition}
\begin{document}
\title{Hitting probabilities for fast stochastic search}
\author{Samantha Linn}
\thanks{SL (corresponding author): Department of Mathematics, University of Utah, Salt Lake City, UT 84112 USA. (\texttt{slinn.math@gmail.com}). SL was supported by the National Science Foundation (Grant No.\, 2139322)}
\author{Sean D. Lawley}
\thanks{SDL: Department of Mathematics, University of Utah, Salt Lake City, UT 84112 USA. (\texttt{lawley@math.utah.edu}). SDL was supported by the National Science Foundation (Grant Nos. DMS-2325258 and DMS-1944574).}
\maketitle

\begin{abstract}
Many physical phenomena are modeled as stochastic searchers looking for targets. In these models, the probability that a searcher finds a particular target, its so-called hitting probability, is often of considerable interest. In this work we determine hitting probabilities for stochastic search processes conditioned on being faster than a random short time. Such times have been used to model stochastic resetting or stochastic inactivation. These results apply to any search process, diffusive or otherwise, whose unconditional short-time behavior can be adequately approximated, which we characterize for broad classes of stochastic search. We illustrate these results in several examples and show that the conditional hitting probabilities depend predominantly on the relative geodesic lengths between the initial position of the searcher and the targets. Finally, we apply these results to a canonical evidence accumulation model for decision making.
\end{abstract}

\section{Introduction}
Various physical phenomena are often modeled as stochastic `searchers' looking for `targets'. When there is more than one target, the probability that a searcher finds a particular target, its so-called hitting probability, is typically of interest. In recent years there has been significant progress in characterizing hitting probabilities associated with stochastic search processes \cite{kesten1987, dalang2009,PhysRevE.102.022115,PhysRevLett.126.100602,Klinger2022}. Here, we study the hitting probabilities associated with stochastic search conditioned on being faster than a random short time. Given the generality of its construction, the results herein are useful in numerous areas of applied mathematics and statistical physics research including search under stochastic resetting or stochastic inactivation.

Stochastic resetting describes the random repositioning of a searcher according to a given distribution. The utility of stochastic resetting often lies in reducing the mean first passage time (FPT) of search processes or otherwise optimizing search processes \cite{Bressloff_2020_gatedtargets,gonzalez2021, schumm2021,Ray2022,tucci2022, Liu2023,guo2023,zbik2023levy}. Consider, for example, the work of Mercado-V\'asquez and Boyer \cite{Mercado_2018}, which models predator dynamics as L\'evy flights that stochastically reset to regions of prey. Instead of assessing strategy optimality in terms of search times, the authors use predator population as a proxy. Their analysis reveals an interesting relationship between mobility, resetting, birth, death, and total population.

Stochastic inactivation, which is used to assign a lifespan to a searcher according to a given distribution, can also be used to tune, or sharpen, search process statistics \cite{yuste2013,meerson2015, grebenkov2017,radice2023, boyer2024optimizing}. In the work of Ma et al \cite{ma2020} on the role of inactivation in intracellular signalling, inactivation can model degradation, (de)phosphorylation, or other mechanisms that immobilize signal propagation. The authors use soft X-ray tomography images of human B cells to study the stochastic search process of a stochastically inactivating intracellular signal from cell membrane to nucleus. By numerically solving the Fokker-Planck equation associated with the propagation of the signal, they determine the full distribution of the arrival time of the signal to the nucleus and quantify how stochastic inactivation can compensate for the delay in signal arrival times due to organelle barriers.

Much of the current literature involving stochastic resetting or inactivation, like the examples described above, considers only exponentially distributed random times. While often suitable, especially in memoryless systems, by no means does this choice exhaust all such conditional search processes \cite{pal2017,evans2020,FPTuFSR}. For example, stochastic inactivation is a one-time event and thus agnostic to system memory. Moreover, it is typical that memory plays a critical role in population-level search processes. For these reasons, and also to emphasize the ease with which our methods can be adapted, we consider a variety of distributions for this time.

In particular, we consider an evidence accumulation model for fast decision making before a short, deterministic time. By equating evidence with the log-likelihood ratio of possible hypotheses, or choices, experimental data suggests that models of evidence accumulation accurately capture decision-making and, moreover, that this process evolves according to a biased Brownian motion \cite{swets1961,ratcliff1978,newsome1989, banburismus2002,chittka2003,uchida2003,gold2007}. Such models have thus been  widely used to study how humans and other animals make choices \cite{bogacz2006,mulder2012, mann2018,karamched2020prl,karamched2020,tump2022,reina2023,stickler2023}.

In the fast search limit, only an approximation of the short-time behavior of the unconditional search process is needed to compute the conditional hitting probabilities. To be precise, consider an unconditional search process and let $\{X(t)\}_{t\geq 0}$ denote the path of the stochastic searcher in a domain $\mathcal{D}\in\mathbb{R}^d$ with $K\geq 2$ targets denoted by $V_0$, $V_1$, $\dots$, $V_{K-1}\in\mathcal{D}$. Then the first hitting time to a target is
\begin{align*}
    \tau := \text{inf}\{t>0 : X(t) \in \cup_{k=0}^{K-1}V_k\in\mathcal{D}\}.
\end{align*}
Let $\kappa \in \{0,1,\dots,K-1\}$ denote the index of the target hit by the searcher. Then the hitting probability of target $V_k$ is
\begin{align*}
    \mathbb{P}(\kappa = k)\quad \text{for }k\in\{0,1,\dots,K-1\}.
\end{align*}
Analytically computing the hitting probabilities for a purely diffusive search requires solving an elliptic partial differential equation with mixed boundary conditions and, depending on the domain, can be quite involved \cite{oksendal2003}. A similarly involved procedure applies to computing statistics of first hitting times \cite{gardiner2009}.

While these results are useful in numerous settings, it has been emphasized that other timescales are often more relevant. For instance, the term `redundancy principle' has recently been coined to convey the utility of seemingly wasteful copies of biological entities in accelerating time-sensitive processes \cite{schuss2019,lawley2020uni}. In this case, the relevant time scale is that of the fastest searcher, called the extreme FPT, and the corresponding hitting probabilities are called extreme hitting probabilities \cite{Linn_2022}.

In another instance, that of interest in this work, there may be a time limit before which an event must occur. Such a time limit may occur naturally or can serve as an external force aiming to filter out certain perhaps undesirable behaviors and provide some amount of reliability, or predictability, to the system. For example, it was recently shown that early arrivers to targets encode information about search initial condition \cite{lindsay2023}. Throughout, we refer to this time limit in relation to stochastic resetting but we emphasize that these results hold much more generally. Moreover, we remark that while first hitting \emph{times} are not agnostic to whether the time constraint indicates inactivation or resetting, hitting \emph{probabilities} are.

\begin{figure}[h!]
  \centering
  \includegraphics[width=8cm]{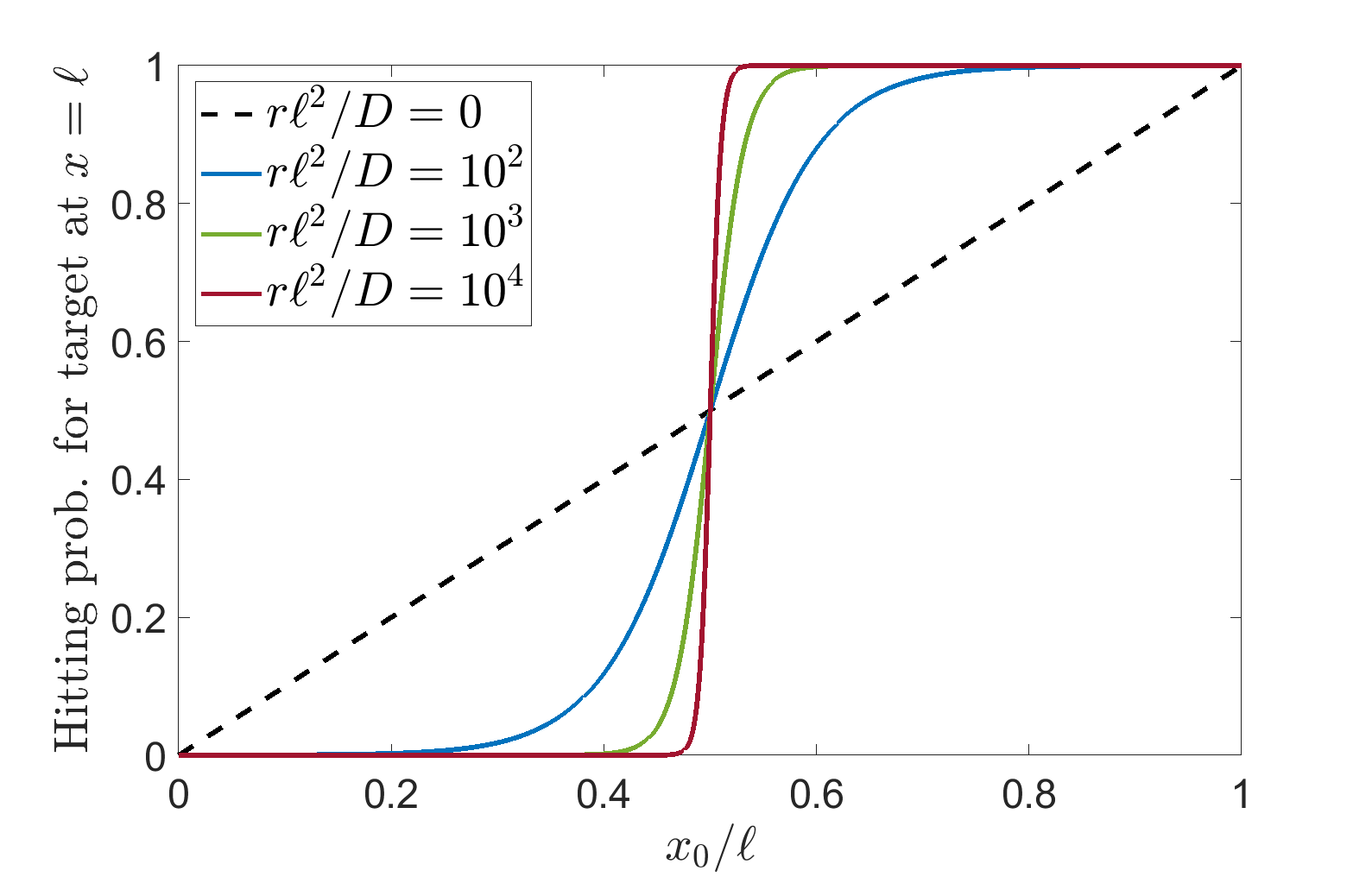}
 \caption{Hitting probabilities for diffusion with unit diffusivity, initial position $x_0\in(0,\ell)$. Solid colored curves illustrate case of exponentially distributed resetting times $\sigma>0$ with rate $r>0$; dashed black curve denotes no conditioning.}
 \label{fig0}
\end{figure}

As a motivating example, consider a driftless diffusive search process with diffusivity $D>0$ in one dimension. Assume the searcher starts at $x_0\in(0,\ell/2)$ where $V_0=(-\infty,0]$ and $V_1=[\ell,\infty)$ are targets. Let $L_0>0$ and $L_1>0$ be the distances to $V_0$ and $V_1$ from $x_0$, respectively. In the case that there is no condition on the search time, one can show \cite{redner2001} that the probability of the searcher reaching $V_1$ before $V_0$ is \[\mathbb{P}(\kappa = 1) = \frac{x_0}{\ell} = 1 - \mathbb{P}(\kappa = 0).\]
Now suppose the searcher stochastically resets to its initial position $x_0$ at random independent and identically-distributed times governed by $\sigma>0$. If $\sigma>0$ is exponentially distributed with rate $r>0$, we show in section \ref{pd} that the hitting probability to $V_1$ decays to zero like
\begin{align} \label{hence}
    \mathbb{P}(\kappa=1 | \tau<\sigma) \sim \text{exp}(-\sqrt{(L_1-L_0)^2r/D})
    \quad\text{as }r\to\infty.
\end{align}
In words, Equation \eqref{hence} says that the hitting probability to the further target decays to zero exponentially fast with respect to the square root of the resetting rate. This decay is especially fast when the relative distances to the nearest target and furthest targets is large. Hence, while previous studies often focus on using resetting to accelerate search times, Equation \eqref{hence} shows how resetting can be used to ensure that a particular target is found. We illustrate this behavior in Figure \ref{fig0}. The remainder of this paper concerns this problem in more general settings with various stochastic search processes, various spatial domains of different dimensions, and various resetting distributions.

Throughout, we write the hitting probability of target $V_k$ associated with a resetting searcher by
\begin{align} \label{prob}
    \mathbb{P}(\kappa = k | \tau<\sigma)\quad \text{for }k\in\{0,1,\dots,K-1\}.
\end{align} 
Let $k=0$ be the index of the target nearest the initial position of the searcher. In Equation \eqref{hence} we show how a hitting probability depends on the geodesic lengths to the nearest target and the target of interest from the searcher initial position. We verify in this work that this exponential decay holds in much more general scenarios of diffusive search and, more broadly, that
\begin{align} \label{lim1}
    \mathbb{P}(\kappa = 0 | \tau < \sigma) \to 1\quad\text{in the frequent resetting limit}.
\end{align}
That is, a frequently resetting searcher will necessarily find the nearest target. Though \eqref{lim1} may seem intuitive, we derive quantitative estimates of the convergence in \eqref{lim1} for a wide variety of search scenarios. Moreover, we show that \eqref{lim1} does not hold for every search process of interest.

The rest of the paper is organized as follows. We detail the results section \ref{section2}. In section \ref{section3}, we apply these results to several examples and compare to numerical solutions. In section \ref{section3b}, we apply our results to an evidence accumulation model for decision making. We conclude by discussing our results in the context of recent related work. Proofs and numerical details are collected in the Appendix.

\section{Conditional hitting probability asymptotics} \label{section2}
In this section, we present results on hitting probabilities for searchers conditioned on being faster than a random short time. This section and the results herein make no reference to the underlying search process, instead assuming properties of the short-time behavior of the unconditional hitting probabilities. Mild conditions are placed on the resetting distribution. In section \ref{section3} we apply our results to diffusive search and other stochastic search processes.

\subsection{Probabilistic setup and integral representation} \label{prelim}
Let $\tau>0$ be a strictly positive random variable and let $\kappa$ be a random variable taking values from the set $\{0,1,\dots,K-1\}$. In the applications of this text, $\tau>0$ denotes the hitting time of an unconditional search process to a target and $\kappa$ indexes which of the $K\geq 2$ targets is hit. For each index $k\in \{0,1,\dots,K-1\}$, let $\tau^{(k)}$ be the hitting time to target $V_k$,
\begin{align*}
    \tau^{(k)} = \begin{cases}
        \tau & \kappa = k,\\
        +\infty & \kappa \neq k,
    \end{cases}
\end{align*}
and define
\begin{align*}
    F_{\tau}^{(k)}(t) :=\mathbb{P}(\tau^{(k)} \leq t) = \mathbb{P}(\tau\leq t \cap \kappa = k),\quad t\in \mathbb{R}.
\end{align*}
Let $F_{\tau}(t)$ denote the cumulative distribution function of $\tau>0$,
\begin{align*}
    F_{\tau}(t) := \mathbb{P}(\tau\leq t) = \sum_{k=0}^{K-1} F_{\tau}^{(k)}(t),\quad t\in\mathbb{R}.
\end{align*}

Further, let $\sigma>0$ be a strictly positive random variable defined by 
\begin{align*}
    \sigma := Y/r,
\end{align*}
where we call the parameter $r>0$ the resetting rate and $Y>0$ is a strictly positive random variable with unit mean and a finite moment generating function in a neighborhood of the origin. That is, there exists $\eta>0$ such that
\begin{align} \label{ezy}
\begin{split}
    \mathbb{E}[Y] &= 1, \\
    \mathbb{E}[e^{zY}] &< \infty\quad \text{for all }z\in[-\eta,\eta]. 
\end{split}
\end{align}
Applying the second condition in \eqref{ezy} and Chebyshev's inequality (see, for example, Theorem 1.6.4 in \cite{durrett2019}), we obtain that $S_Y(y) := \mathbb{P}(Y>y)$ decays at least exponentially fast,
\begin{align*}
    S_Y(y) \leq Ce^{-\eta y}\quad\text{for all } y\in\mathbb{R},
\end{align*}
where $\eta>0$ is as in \eqref{ezy} and $C = \mathbb{E}[e^{\eta Y}] < \infty$. Since $\tau$ and $\sigma$ are independent random variables, we can write the probability of the search process ending prior to a resetting event as a Riemann-Stieltjes integral,
\begin{align} \label{pdef}
    p = p(r) := \mathbb{P}(\tau<\sigma) = \mathbb{E}[S_{\sigma}(\tau)] = \int_0^{\infty} S_{\sigma}(t)\,\textup{d}F_{\tau}(t),
\end{align}
where $S_{\sigma}(t) = \mathbb{P}(\sigma>t)$ is the survival probability of $\sigma>0$. Similarly, since $\tau^{(k)}$ and $\sigma$ are independent random variables, the probability of the search process hitting target $\kappa=k$ prior to a resetting event is
\begin{align} \label{pkdef}
    p_k = p_k(r) := \mathbb{P}(\kappa=k \cap \tau<\sigma) = \mathbb{E}[S_{\sigma}(\tau^{(k)})] = \int_0^{\infty} S_{\sigma}(t)\,\textup{d}F_{\tau}^{(k)}(t).
\end{align}
To avoid trivial cases, we assume $p>0$ and $p_k>0$ for all $r>0$. Hence, by definition of conditional probability,
\begin{align} \label{t1}
    \mathbb{P}(\kappa=k | \tau < \sigma) \sim \frac{\tilde{p_k}}{\tilde{p}}\quad\text{as }r\to\infty
\end{align}
where $\tilde{p}=\tilde{p}(r)$ is any function of $r>0$ such that $\tilde{p}\sim p$ as $r\to\infty$ and likewise of $\tilde{p_k}$. Throughout this work, $f\sim g$ means $f/g \to 1$.
The results in subsection \ref{22} allow us to determine explicit expressions for \eqref{t1} in a wide variety of stochastic search processes.

\subsection{Exact hitting probability asymptotics} \label{22}
Applying \eqref{t1} to a specific search process requires specifying a resetting distribution. We consider four such distributions: exponential, gamma, uniform, and sharp (or deterministic). The following proposition, previously stated and proved in \cite{Linn_2022}, states the asymptotic behavior of an integral representative of those in \eqref{pdef} and \eqref{pkdef} in typical scenarios of diffusive search under exponential and gamma distributed resetting.

\begin{proposition} \label{prop3}
    Assume $C>0$, $\delta>0$ and $b\in\mathbb{R}$. Then
    \begin{align} \label{prop3eq}
        \int_0^{\delta} e^{-rt} t^b e^{-C/t} \,\textup{d}t \sim \sqrt{\pi C^{\frac{2b+1}{2}}} r^{\frac{-2b-3}{4}} e^{-\sqrt{4Cr}}  \quad \text{as }r\to\infty.
    \end{align}
\end{proposition}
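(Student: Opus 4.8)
The plan is to evaluate the asymptotics of the integral
\[
I(r) := \int_0^{\delta} e^{-rt}\, t^b\, e^{-C/t}\,\dd t
\]
using Laplace's method, since as $r\to\infty$ the integrand is sharply peaked. The exponent is $\phi(t) := -rt - C/t$, whose derivative $\phi'(t) = -r + C/t^2$ vanishes at $t^* = \sqrt{C/r}$, which lies inside $(0,\delta)$ for all sufficiently large $r$. At this saddle point the exponent takes the value $\phi(t^*) = -2\sqrt{Cr}$, which already explains the leading exponential factor $e^{-\sqrt{4Cr}}$ in \eqref{prop3eq}. The algebraic prefactor will come from the Gaussian width of the peak together with the slowly varying factor $t^b$ evaluated at $t^*$.

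**First I would** carry out the standard Laplace expansion around $t^*$. Writing $t = t^* + s$ and Taylor-expanding, $\phi''(t) = -2C/t^3$, so $\phi''(t^*) = -2C/(t^*)^3 = -2 r^{3/2} C^{-1/2}$. The quadratic approximation gives a Gaussian integral of width $\sim |\phi''(t^*)|^{-1/2}$, contributing a factor $\sqrt{2\pi/|\phi''(t^*)|} = \sqrt{\pi}\, C^{1/4} r^{-3/4}$. The slowly varying factor contributes $(t^*)^b = C^{b/2} r^{-b/2}$. Multiplying these with $e^{\phi(t^*)} = e^{-2\sqrt{Cr}} = e^{-\sqrt{4Cr}}$ and collecting powers of $C$ and $r$ should reproduce the claimed $\sqrt{\pi}\, C^{(2b+1)/4}\, r^{(-2b-3)/4}\, e^{-\sqrt{4Cr}}$; I would verify the exponent bookkeeping, since the stated power $C^{(2b+1)/2}$ under the square root is exactly $C^{(2b+1)/4}$ after taking the root, matching $C^{b/2}\cdot C^{1/4}$.

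**The hard part will be** justifying the Laplace approximation rigorously rather than merely formally, i.e. controlling the error uniformly. A clean route is the substitution $t = \sqrt{C/r}\,u$, which rescales the integral into the form $C^{(b+1)/2} r^{-(b+1)/2}\int_0^{\delta\sqrt{r/C}} e^{-\sqrt{Cr}\,(u + 1/u)}\, u^b\,\dd u$; the large parameter $\sqrt{Cr}$ multiplies the convex function $g(u) = u + 1/u$, which attains its unique minimum $g(1)=2$ at $u=1$. A further substitution centering at $u=1$ reduces this to a standard Laplace integral with a single interior minimum and extends the upper limit to $\infty$ with exponentially small error. I would then confirm that the tails (both near $t=0$, where $e^{-C/t}$ decays superexponentially, and near $t=\delta$, away from the peak) are negligible compared to the saddle contribution, so that replacing the finite interval by $(0,\infty)$ introduces only lower-order error. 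With these rescalings, the result follows from a textbook Laplace/steepest-descent lemma, and the main care needed is simply tracking the constants through the two changes of variable.
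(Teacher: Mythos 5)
Your proposal is correct and takes essentially the same approach as the paper: the rescaling $t=\sqrt{C/r}\,u$ you describe is exactly the paper's change of variables, reducing the integral to $(C/r)^{(b+1)/2}\int_0^{\delta\sqrt{r/C}} u^b \exp\bigl(-\sqrt{Cr}\,(u+u^{-1})\bigr)\,\dd u$, after which Laplace's method at the interior minimum $u=1$ of $u+u^{-1}$ gives the result. Your constant bookkeeping ($C^{b/2}r^{-b/2}$ from $(t^*)^b$, $\sqrt{\pi}\,C^{1/4}r^{-3/4}$ from the Gaussian width, and $e^{-2\sqrt{Cr}}$ from the saddle value) correctly reproduces $\sqrt{\pi}\,C^{(2b+1)/4}\,r^{(-2b-3)/4}\,e^{-\sqrt{4Cr}}$.
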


Theorem \ref{thm3} below uses Proposition \ref{prop3} to compute the asymptotic behavior of \eqref{prob} in the large $r$ limit assuming information about the short-time behavior of $F_{\tau}$ and $F_{\tau}^{(k)}$ on a linear scale in the case that resetting is gamma distributed. Its corollary (Corollary \ref{cor4}) concerns the specific case that resetting is exponentially distributed.
\begin{theorem} \label{thm3}
    Under the assumptions of section \ref{prelim}, assume further that $\sigma>0$ is gamma distributed with shape $\eta>0$ and rate $r>0$ and that for some $k\in\{1,\dots,K-1\}$,
    \begin{align}
        F_{\tau}(t) &\sim At^me^{-C/t}\quad \text{as }t\to 0^+,\label{Ftau1}\\
        F_{\tau}^{(k)}(t) &\sim Bt^ne^{-C_k/t}\quad \text{as }t\to 0^+,\label{Ftauk1}
    \end{align}
    where $A>0$, $B>0$, $C_k>C>0$, and $m,\,n\in\mathbb{R}$. Then
    \begin{align} \label{gam_result}
        \mathbb{P}(\kappa=k | \tau<\sigma) \sim \xi r^{(m-n)/2} e^{-2(\sqrt{C_k}-\sqrt{C})\sqrt{r}} \quad\text{as }r\to\infty,
    \end{align}
    where
    \begin{align*}
        \xi := \frac{B}{A} \frac{C_k^{(2n+2\eta-1)/4}}{C^{(2m+2\eta-1)/4}} > 0.
    \end{align*}
\end{theorem}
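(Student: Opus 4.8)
The plan is to compute the large-$r$ asymptotics of the two quantities $p$ and $p_k$ from \eqref{pdef} and \eqref{pkdef} separately and then form their ratio, invoking \eqref{t1} to conclude. Since $\sigma$ is gamma distributed with shape $\eta$ and rate $r$, it has the smooth density $f_\sigma(t)=\frac{r^\eta}{\Gamma(\eta)}t^{\eta-1}e^{-rt}$ and a survival function with $S_\sigma(0)=1$ and $S_\sigma(\infty)=0$. The first step is to integrate \eqref{pdef} by parts in the Riemann--Stieltjes sense. Because $F_\tau(0^+)=0$ (the factor $e^{-C/t}$ in \eqref{Ftau1} kills any power of $t$) and $S_\sigma$ decays exponentially while $F_\tau$ stays bounded, both boundary terms vanish and
\begin{align*}
  p = \int_0^\infty F_\tau(t)\, f_\sigma(t)\,\dd t = \frac{r^\eta}{\Gamma(\eta)}\int_0^\infty F_\tau(t)\, t^{\eta-1} e^{-rt}\,\dd t,
\end{align*}
and likewise $p_k=\frac{r^\eta}{\Gamma(\eta)}\int_0^\infty F_\tau^{(k)}(t)\, t^{\eta-1} e^{-rt}\,\dd t$. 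This recasts the problem into two integrals of exactly the type handled by Proposition \ref{prop3}.

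Next I would show that only the small-$t$ behavior of $F_\tau$ and $F_\tau^{(k)}$ matters to leading order. Fix a small $\delta>0$ and split each integral at $\delta$. On $[\delta,\infty)$ the integrand is dominated by $f_\sigma(t)$, so that piece contributes at most $S_\sigma(\delta)$, which decays like $e^{-r\delta}$ up to a polynomial prefactor; since $\delta$ is fixed this tail is exponentially smaller than the anticipated leading scale $e^{-2\sqrt{C}\sqrt r}$ because $\sqrt r\ll r$, and similarly for $p_k$ against $e^{-2\sqrt{C_k}\sqrt r}$. On $[0,\delta]$ I would use \eqref{Ftau1}: for any $\eps>0$ one may shrink $\delta$ so that $(1-\eps)At^m e^{-C/t}\le F_\tau(t)\le (1+\eps)At^m e^{-C/t}$ throughout, which sandwiches
\begin{align*}
  \int_0^\delta F_\tau(t)\, t^{\eta-1} e^{-rt}\,\dd t \;\sim\; A\int_0^\delta t^{m+\eta-1} e^{-C/t} e^{-rt}\,\dd t \quad\text{as } r\to\infty,
\end{align*}
and analogously for $F_\tau^{(k)}$ using \eqref{Ftauk1}.

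I would then apply Proposition \ref{prop3} with exponent $b=m+\eta-1$ and constant $C$, and separately with $b=n+\eta-1$ and constant $C_k$, folding the $r^\eta$ prefactor into the powers of $r$ produced by the proposition. This yields
\begin{align*}
  p \sim \frac{A\sqrt\pi}{\Gamma(\eta)}\, C^{(2m+2\eta-1)/4}\, r^{(2\eta-2m-1)/4}\, e^{-2\sqrt{C}\sqrt r}, \qquad p_k \sim \frac{B\sqrt\pi}{\Gamma(\eta)}\, C_k^{(2n+2\eta-1)/4}\, r^{(2\eta-2n-1)/4}\, e^{-2\sqrt{C_k}\sqrt r}.
\end{align*}
Dividing $p_k$ by $p$ cancels the common $\sqrt\pi/\Gamma(\eta)$, leaves the constant $\xi$, combines the $r$-powers into $r^{(m-n)/2}$, and produces $e^{-2(\sqrt{C_k}-\sqrt C)\sqrt r}$; by \eqref{t1} this ratio is asymptotic to $\P(\kappa=k\mid\tau<\sigma)$, which is exactly \eqref{gam_result}.

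The main obstacle I anticipate is the localization argument in the second paragraph: rigorously justifying that the fixed-range tail on $[\delta,\infty)$ is negligible and that the $\eps$-sandwich on $[0,\delta]$ survives the limit requires comparing the two competing exponential scales $e^{-r\delta}$ and $e^{-2\sqrt{C}\sqrt r}$ and verifying that the resulting errors stay $o(1)$ relative to the leading asymptotics even after multiplication by $r^\eta$. Once this Watson/Laplace-type reduction is in place, the remaining steps are routine algebra, with Proposition \ref{prop3} performing the analytic heavy lifting.
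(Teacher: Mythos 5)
Your proposal is correct and follows essentially the same route as the paper's proof: integrate by parts to rewrite $p$ and $p_k$ as integrals of $F_{\tau}$ and $F_{\tau}^{(k)}$ against the gamma density, localize to $[0,\delta]$ with an $\varepsilon$-sandwich while showing the tail is exponentially negligible, apply Proposition \ref{prop3} with $b=m+\eta-1$ (resp.\ $b=n+\eta-1$), and conclude by taking the ratio via \eqref{t1}. Your intermediate asymptotics even carry the correct $\eta$-dependent constants $C^{(2m+2\eta-1)/4}$ and $C_k^{(2n+2\eta-1)/4}$, consistent with the stated $\xi$ (the paper's own proof displays these constants with a minor typo).
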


\begin{corollary} \label{cor4}
    Under the assumptions of Theorem \ref{thm3}, assume further that $\eta=1$. That is, $\sigma>0$ is exponentially distributed with rate $r>0$. Then \eqref{gam_result} holds with 
    \begin{align*}
        \xi := \frac{B}{A} \frac{C_k^{(2n+1)/4}}{C^{(2m+1)/4}} > 0.
    \end{align*}
\end{corollary}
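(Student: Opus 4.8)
The plan is to obtain Corollary \ref{cor4} as the immediate specialization $\eta=1$ of Theorem \ref{thm3}, which I am allowed to assume. A gamma random variable with shape $\eta=1$ and rate $r$ has density $\tfrac{r}{\Gamma(1)}t^{0}e^{-rt}=re^{-rt}$, i.e.\ it is exactly exponential with rate $r$; hence the hypotheses of Theorem \ref{thm3} hold verbatim with $\eta=1$. Substituting $\eta=1$ into the conclusion leaves the rate $e^{-2(\sqrt{C_k}-\sqrt{C})\sqrt{r}}$ and the power $r^{(m-n)/2}$ of \eqref{gam_result} untouched (neither depends on $\eta$), while the prefactor becomes
\begin{equation*}
\xi=\frac{B}{A}\,\frac{C_k^{(2n+2\eta-1)/4}}{C^{(2m+2\eta-1)/4}}\Bigg|_{\eta=1}=\frac{B}{A}\,\frac{C_k^{(2n+1)/4}}{C^{(2m+1)/4}},
\end{equation*}
which is precisely the claimed constant. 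So the corollary is one line given Theorem \ref{thm3}; since all of the analytic content lives in that theorem, I describe below how I would establish it, as that is where any real obstacle appears.

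To prove Theorem \ref{thm3}, by \eqref{t1} it suffices to find the large-$r$ asymptotics of $p=p(r)$ and $p_k=p_k(r)$ separately and form their ratio. The integrals \eqref{pdef}--\eqref{pkdef} are Riemann--Stieltjes integrals against $F_\tau$ and $F_\tau^{(k)}$, whereas \eqref{Ftau1}--\eqref{Ftauk1} and Proposition \ref{prop3} are phrased via those distribution functions and ordinary integrals. I would therefore integrate by parts, using $F_\tau(0)=0$ (as $\tau>0$), $S_\sigma(\infty)=0$, and the boundedness of $F_\tau$, together with the gamma density $f_\sigma(t)=r^\eta t^{\eta-1}e^{-rt}/\Gamma(\eta)$, to write
\begin{equation*}
p=\int_0^\infty S_\sigma(t)\,\dd F_\tau(t)=\int_0^\infty F_\tau(t)f_\sigma(t)\,\dd t=\frac{r^\eta}{\Gamma(\eta)}\int_0^\infty F_\tau(t)\,t^{\eta-1}e^{-rt}\,\dd t,
\end{equation*}
and similarly for $p_k$ with $F_\tau^{(k)}$ in place of $F_\tau$. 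Next I would localize near $t=0$, where \eqref{Ftau1}--\eqref{Ftauk1} apply: fix small $\delta>0$ and split over $(0,\delta]$ and $[\delta,\infty)$. On the tail the integrand is dominated by $f_\sigma(t)$, so its contribution is at most $S_\sigma(\delta)$, which decays exponentially in $r$ (the incomplete-gamma tail, cf.\ the bound on $S_Y$ in section \ref{prelim}). On $(0,\delta]$ the asymptotic \eqref{Ftau1} lets me sandwich $F_\tau(t)$ between $(1\mp\eps)At^m e^{-C/t}$, reducing the main part to $\int_0^\delta t^{m+\eta-1}e^{-rt}e^{-C/t}\,\dd t$; applying Proposition \ref{prop3} with $b=m+\eta-1$ and sending $\eps\to0$ gives
\begin{equation*}
p\sim\frac{A\sqrt{\pi}}{\Gamma(\eta)}\,C^{(2m+2\eta-1)/4}\,r^{(2\eta-2m-1)/4}\,e^{-2\sqrt{Cr}}\quad\text{as }r\to\infty,
\end{equation*}
with the analogous formula for $p_k$ obtained by replacing $(A,m,C)$ with $(B,n,C_k)$ and taking $b=n+\eta-1$.

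Finally I would form the ratio $p_k/p$: the factors $\sqrt{\pi}/\Gamma(\eta)$ cancel, the powers of $r$ combine to $r^{(m-n)/2}$, the exponentials to $e^{-2(\sqrt{C_k}-\sqrt{C})\sqrt{r}}$, and the remaining constants assemble into $\xi$, so that \eqref{t1} yields \eqref{gam_result}. The one genuinely delicate step is the localization: rigorously showing that replacing $F_\tau$ (resp.\ $F_\tau^{(k)}$) by its $t\to0^+$ asymptotic and discarding the tail incurs only a multiplicative $(1+o(1))$ error. This is a Watson's-lemma-type estimate, made clean here by the fact that the tail decays like $e^{-cr}$ for some $c>0$ while the main contribution decays only like $e^{-2\sqrt{Cr}}$, so that $e^{-cr}$ is negligible and the leading-order answer is independent of $\delta$. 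Everything else is bookkeeping of the exponents of $r$ and $C$, which reproduces exactly $\xi$ and the stated rate.
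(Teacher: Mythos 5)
Your proposal is correct and takes essentially the same route as the paper: Corollary \ref{cor4} is obtained, exactly as the paper intends, by specializing Theorem \ref{thm3} to $\eta=1$ (a gamma variable with shape $1$ and rate $r$ is exponential with rate $r$), and your sketch of Theorem \ref{thm3} reproduces the paper's appendix argument---integration by parts to trade $S_\sigma\,\dd F_\tau$ for $F_\tau f_\sigma\,\dd t$, an $\epsilon$-sandwich of $F_\tau$ near $t=0$, Proposition \ref{prop3} with $b=m+\eta-1$, an exponentially small tail beyond $\delta$, and finally the ratio $p_k/p$. As a side note, your intermediate asymptotics for $p$, with prefactor $C^{(2m+2\eta-1)/4}r^{(2\eta-2m-1)/4}$, are the ones actually consistent with the theorem's stated $\xi$; the corresponding displays inside the paper's proof ($C^{(2m+1)/4}$, $r^{(4\eta-2m-3)/4}$) contain minor typos that agree with yours only when $\eta=1$.
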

We remark on the exclusion of $k=0$ from Theorem \ref{thm3} above: Information about the nearest target determines the constants in \eqref{Ftau1}. Hence if $k=0$ the expressions in \eqref{Ftau1} and \eqref{Ftauk1} are identical and the asymptotic hitting probability is trivial. For this reason, $k=0$ is also omitted from the remaining results that follow.

Moreover, we note that in both Theorem \ref{thm3} and Corollary \ref{cor4} the dominant behavior of the asymptotic hitting probabilities depends only on $C$ and $C_k$. We show in section \ref{section3} that this dependence is in fact only on the geodesic lengths to the nearest target and to target $k$ for diffusive search. Also observe that as $r$ increases in Theorem \ref{thm3} and Corollary \ref{cor4}, the asymptotic hitting probabilities behave progressively more like decaying exponential functions. The same qualitative behavior is true in the following theorem, which is relevant to typical scenarios of diffusive search under uniformly distributed resetting.
\begin{theorem} \label{thm_uni}
    Under the assumptions of section \ref{prelim}, assume further that $\sigma>0$ is uniformly distributed on $[0,2/r]$ and for some $k\in\{1,\dots,K-1\}$,
    \begin{align*}
        F_{\tau}(t) &\sim At^me^{-C/t}\quad \text{as }t\to 0^+,\\
        F_{\tau}^{(k)}(t) &\sim Bt^ne^{-C_k/t}\quad \text{as }t\to 0^+,
    \end{align*}
    where $A>0$, $B>0$, $C_k>C>0$, and $m,\,n\in\mathbb{R}$. Then
    \begin{align*}
        \mathbb{P}(\kappa=k | \tau<\sigma) \sim \frac{B}{A} \frac{C}{C_k} \Big(\frac{r}{2}\Big)^{m-n} e^{-(C_k-C)r/2} \quad\text{as }r\to\infty.
    \end{align*}
\end{theorem}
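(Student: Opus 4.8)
The plan is to reduce the ratio in \eqref{t1} to two explicit integrals of the distribution functions and then evaluate each by an endpoint Laplace estimate. The first step is to record the survival function: since $\sigma$ is uniform on $[0,2/r]$, we have $S_\sigma(t)=1-rt/2$ for $t\in[0,2/r]$ and $S_\sigma(t)=0$ for $t>2/r$. Writing $S_\sigma(t)=\tfrac{r}{2}\int_0^{2/r}\mathbb{1}_{s>t}\,\textup{d}s$, substituting into \eqref{pdef}, and swapping the order of integration by Tonelli's theorem (equivalently, Riemann--Stieltjes integration by parts, where the boundary terms vanish because $S_\sigma(2/r)=0$ and $F_\tau(0^+)=0$), I would obtain the exact identity
\begin{align*}
    p=\frac{r}{2}\int_0^{2/r}F_\tau(t)\,\textup{d}t,
\end{align*}
and likewise $p_k=\tfrac{r}{2}\int_0^{2/r}F_\tau^{(k)}(t)\,\textup{d}t$. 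The point of this reformulation is that it involves $F_\tau$ and $F_\tau^{(k)}$ directly, so the hypothesized short-time asymptotics enter without any assumption on densities.

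Next I would replace $F_\tau$ by its short-time asymptote. Because the integration domain $[0,2/r]$ shrinks to the origin as $r\to\infty$, for each $\epsilon>0$ there is $\delta>0$ with $(1-\epsilon)At^me^{-C/t}\le F_\tau(t)\le(1+\epsilon)At^me^{-C/t}$ on $(0,\delta)$; for $r>2/\delta$ the whole range lies in $(0,\delta)$, and sandwiching the integral yields $p\sim\tfrac{r}{2}A\int_0^{2/r}t^me^{-C/t}\,\textup{d}t$, with the analogous statement for $p_k$. This sandwiching is the technical glue that licenses substituting the assumed asymptotics under the integral sign.

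The computational core is then the behavior of $I(r):=\int_0^{2/r}t^me^{-C/t}\,\textup{d}t$. Here the integrand is increasing on $[0,2/r]$ for large $r$, so the mass concentrates at the \emph{upper} endpoint $t=2/r$; this is the essential difference from the exponentially weighted integrals handled by Proposition \ref{prop3}, so a separate estimate is genuinely needed. I would make the substitution $u=1/t$, turning $I(r)$ into the incomplete-gamma tail $\int_{r/2}^{\infty}u^{-m-2}e^{-Cu}\,\textup{d}u$. Shifting $u=r/2+v$ and freezing $(r/2+v)^{-m-2}\sim(r/2)^{-m-2}$ on the region where $e^{-Cv}$ is not negligible gives $I(r)\sim C^{-1}(r/2)^{-m-2}e^{-Cr/2}$, whence $p\sim\tfrac{A}{C}2^{m+1}r^{-m-1}e^{-Cr/2}$ and $p_k\sim\tfrac{B}{C_k}2^{n+1}r^{-n-1}e^{-C_kr/2}$.

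Forming the ratio $p_k/p$ from \eqref{t1} and collecting the powers of $2$ into a factor $(r/2)^{m-n}$ then yields
\begin{align*}
    \mathbb{P}(\kappa=k\mid\tau<\sigma)\sim\frac{p_k}{p}\sim\frac{B}{A}\frac{C}{C_k}\Big(\frac{r}{2}\Big)^{m-n}e^{-(C_k-C)r/2},
\end{align*}
as claimed. I expect the main obstacle to be the endpoint Laplace estimate for $I(r)$: since the dominant contribution sits at the moving boundary $t=2/r$ rather than in the interior, care is required to confirm that extending the incomplete-gamma tail's effective range (equivalently, freezing $(r/2+v)^{-m-2}$ at $v=0$) introduces only a relatively $o(1)$ error, uniformly in $r$. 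Once that estimate is in hand, the reduction via Tonelli and the sandwiching argument are routine.
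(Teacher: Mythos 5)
Your proposal is correct and follows essentially the same route as the paper's proof: integrate by parts to get $p=\tfrac{r}{2}\int_0^{2/r}F_\tau(t)\,\textup{d}t$, sandwich $F_\tau$ by its short-time asymptote, recognize $\int_0^{2/r}t^me^{-C/t}\,\textup{d}t$ as an upper incomplete gamma tail (the paper writes it as $C^{m+1}\Gamma(-m-1,Cr/2)$ and invokes $\Gamma(s,x)\sim x^{s-1}e^{-x}$, which is exactly your shift-and-freeze estimate, made rigorous by dominated convergence), and take the ratio. The only cosmetic differences are that you observe the tail contribution $I_{2/r,\infty}$ vanishes identically since $S_\sigma(t)=0$ for $t>2/r$ (slightly cleaner than the paper's appeal to exponential decay from Theorem \ref{thm3}), and you derive the gamma-tail asymptotic by hand rather than quoting it.
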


In addition to instances of diffusive search, we consider superdiffusive search, discrete stochastic search on a network, and run-and-tumble search. Despite vast differences in the nature of these three search processes, the short-time behavior of their unconditional search times is similar. The following theorem describes the asymptotic behavior of the conditional hitting probabilities under exponential and gamma distributed resetting.

\begin{theorem} \label{net_gam}
    Under the assumptions of section \ref{prelim}, assume further that $\sigma>0$ is gamma distributed with shape $\eta>0$ and rate $r>0$ and that for some $k\in\{1,\dots,K-1\}$,
    \begin{align}
        F_{\tau}(t) &\sim At^m\quad \text{as }t\to 0^+, \label{7a}\\
        F_{\tau}^{(k)}(t) &\sim Bt^n\quad \text{as }t\to 0^+, \label{7b}
    \end{align}
    where $A>0$, $B>0$, and $0<m\leq n$. Then
    \begin{align*} 
        \mathbb{P}(\kappa=k | \tau<\sigma) \sim \frac{B}{A} \frac{\Gamma(n+\eta)}{\Gamma(m+\eta)} r^{m-n} \quad\text{as }r\to\infty.
    \end{align*}
\end{theorem}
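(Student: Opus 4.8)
The plan is to reduce both $p$ and $p_k$ to Laplace-type integrals governed by the behavior of $F_\tau$ and $F_\tau^{(k)}$ near the origin, and then to extract their leading-order asymptotics by a Watson's-lemma argument. Since $\sigma>0$ is gamma distributed with shape $\eta$ and rate $r$, it has density $f_\sigma(t)=\frac{r^\eta}{\Gamma(\eta)}t^{\eta-1}e^{-rt}$ and $\dd S_\sigma(t)=-f_\sigma(t)\,\dd t$. Integrating the representations \eqref{pdef} and \eqref{pkdef} by parts in the Riemann--Stieltjes sense, and noting that the boundary terms vanish because $F_\tau(0^+)=F_\tau^{(k)}(0^+)=0$ (as $m,n>0$) while $S_\sigma(t)\to0$ as $t\to\infty$, I would obtain
\begin{align*}
    p = \frac{r^\eta}{\Gamma(\eta)}\int_0^\infty F_\tau(t)\,t^{\eta-1}e^{-rt}\,\dd t,
    \qquad
    p_k = \frac{r^\eta}{\Gamma(\eta)}\int_0^\infty F_\tau^{(k)}(t)\,t^{\eta-1}e^{-rt}\,\dd t.
\end{align*}

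Next I would analyze $\int_0^\infty F_\tau(t)\,t^{\eta-1}e^{-rt}\,\dd t$. Writing $g(t):=F_\tau(t)\,t^{\eta-1}$, the hypothesis \eqref{7a} gives $g(t)\sim A\,t^{m+\eta-1}$ as $t\to0^+$, with exponent $m+\eta-1>-1$; moreover $g(t)\le t^{\eta-1}$ since $F_\tau\le1$, so the Laplace transform of $g$ is finite for every $r>0$. This is precisely the setting in which the leading term of Watson's lemma applies, yielding $\int_0^\infty g(t)e^{-rt}\,\dd t\sim A\,\Gamma(m+\eta)\,r^{-(m+\eta)}$ as $r\to\infty$, and hence $p\sim\frac{A\,\Gamma(m+\eta)}{\Gamma(\eta)}\,r^{-m}$. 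The identical computation with \eqref{7b} gives $p_k\sim\frac{B\,\Gamma(n+\eta)}{\Gamma(\eta)}\,r^{-n}$. Substituting these equivalents, which are admissible as $\tilde p,\tilde p_k$ in \eqref{t1}, yields
\begin{align*}
    \mathbb{P}(\kappa=k|\tau<\sigma)\sim\frac{p_k}{p}\sim\frac{B}{A}\,\frac{\Gamma(n+\eta)}{\Gamma(m+\eta)}\,r^{m-n}.
\end{align*}

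I expect the only genuinely technical point to be the Watson's-lemma step, since Proposition \ref{prop3} is unavailable here: it requires the constant in the essential singularity $e^{-C/t}$ to be strictly positive, whereas the present CDFs are purely algebraic (the $C=0$ case), so the Gaussian-type saddle is replaced by the ordinary Laplace scaling $\int_0^\delta t^b e^{-rt}\,\dd t\sim\Gamma(b+1)r^{-(b+1)}$. I would therefore justify the asymptotic directly by an elementary Abelian argument rather than cite Proposition \ref{prop3}: fix $\epsilon>0$, choose $\delta>0$ so that $(1-\epsilon)A t^m\le F_\tau(t)\le(1+\epsilon)A t^m$ on $(0,\delta)$, bound the resulting integrals over $(0,\delta)$ against $\int_0^\infty t^{m+\eta-1}e^{-rt}\,\dd t=\Gamma(m+\eta)r^{-(m+\eta)}$, and control the tail over $(\delta,\infty)$ using $F_\tau\le1$ together with the fact that $\int_\delta^\infty t^{\eta-1}e^{-rt}\,\dd t$ decays exponentially in $r$ and is therefore $o(r^{-(m+\eta)})$. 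Because $F_\tau$ and $F_\tau^{(k)}$ are bounded cumulative distribution functions, this is the ``easy'' Abelian direction and the tail estimate is routine; the condition $0<m\le n$ plays no role in the argument beyond guaranteeing that the boundary terms vanish and that the stated limit is finite.
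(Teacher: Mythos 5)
Your proposal is correct and follows essentially the same route as the paper's proof: integrate by parts to convert $\int_0^\infty S_\sigma\,\dd F_\tau$ into $\frac{r^\eta}{\Gamma(\eta)}\int_0^\infty F_\tau(t)\,t^{\eta-1}e^{-rt}\,\dd t$, sandwich $F_\tau$ between $(1\pm\epsilon)At^m$ on $(0,\delta)$, apply the elementary Laplace/Watson asymptotic $\int_0^\delta t^{m+\eta-1}e^{-rt}\,\dd t\sim\Gamma(m+\eta)r^{-(m+\eta)}$, show the tail over $(\delta,\infty)$ decays exponentially, and take the ratio $p_k/p$. The only (cosmetic) difference is that you integrate by parts over all of $[0,\infty)$ before splitting, so your boundary terms vanish identically, whereas the paper splits at $\delta$ first and then bounds the resulting boundary terms $S_\sigma(\delta)F_\tau(\delta)$, which decay exponentially.
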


The following corollary, while immediately clear from Theorem \ref{net_gam}, applies to superdiffusive search \cite{lawley2021super} and fundamentally differs from instances of diffusive search. In particular, these probabilities are neither zero nor one for every target in the asymptotic limit, hence, the nearest target may not be found.

\begin{corollary} \label{cor8}
    Under the assumptions of Theorem \ref{net_gam}, assume further that $m=n=1$. Then
    \begin{align*} 
        \mathbb{P}(\kappa=k | \tau<\sigma) \sim \frac{B}{A} \quad\text{as }r\to\infty.
    \end{align*}
\end{corollary}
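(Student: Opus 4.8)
The plan is to obtain this as a direct specialization of Theorem \ref{net_gam}, so the whole argument reduces to substituting the exponents and checking that no hypothesis is violated. First I would confirm that the assumption $m=n=1$ is admissible: Theorem \ref{net_gam} requires $0<m\le n$, and $m=n=1$ plainly satisfies $0<1\le 1$. Since the superdiffusive short-time laws \eqref{7a}--\eqref{7b} are assumed to hold with these exponents, every hypothesis of the theorem is met and its conclusion may be invoked verbatim.

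Next I would carry out the substitution into the asymptotic formula of Theorem \ref{net_gam}. Two simplifications occur simultaneously. The power of $r$ collapses, because $r^{m-n}=r^{1-1}=r^{0}=1$, so the algebraic $r$-prefactor vanishes entirely; and the ratio of Gamma functions collapses as well, since $\Gamma(n+\eta)/\Gamma(m+\eta)=\Gamma(1+\eta)/\Gamma(1+\eta)=1$ independently of the shape parameter $\eta>0$. Combining these, the conclusion of Theorem \ref{net_gam} reads
\begin{align*}
\mathbb{P}(\kappa=k \mid \tau<\sigma) \sim \frac{B}{A}\,\frac{\Gamma(1+\eta)}{\Gamma(1+\eta)}\,r^{1-1} = \frac{B}{A}\quad\text{as }r\to\infty,
\end{align*}
which is exactly the claimed statement.

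I do not expect any genuine obstacle here: all of the analytic content—the short-time integral asymptotics and the gamma-resetting computation—is already packaged inside Theorem \ref{net_gam}, and the corollary is a one-line reading of that result. The only point worth flagging, more as a remark than a difficulty, is conceptual rather than computational: it is precisely the joint vanishing of the $r$-power exponent and of the Gamma-function ratio when $m=n$ that leaves a limit strictly between $0$ and $1$. This is the mechanism by which a frequently resetting superdiffusive searcher fails to concentrate on the nearest target, in sharp contrast to the diffusive cases of Theorems \ref{thm3} and \ref{thm_uni}, where a nontrivial exponential factor in $\sqrt{r}$ or $r$ survives and drives the hitting probability of every non-nearest target to zero.
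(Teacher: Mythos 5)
Your proposal is correct and matches the paper's treatment exactly: the paper states that Corollary \ref{cor8} is ``immediately clear from Theorem \ref{net_gam},'' and your argument is precisely that direct substitution of $m=n=1$, under which the ratio $\Gamma(n+\eta)/\Gamma(m+\eta)$ and the factor $r^{m-n}$ both reduce to $1$. Nothing further is needed.
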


The final theorem, stated below, holds for any short-time behavior of $F_{\tau}^{(k)}$ and $F_{\tau}$ specified in the aforementioned results and any other instance not specified herein. In particular, it conveys that hitting probabilities conditioned on a deterministic reset time behave like the ratio of these functions of $\tau>0$ evaluated at that time.

\begin{theorem} \label{sharp}
    Under the assumptions of section \ref{prelim}, assume further that $\sigma>0$ is deterministic and equal to $\sigma:=1/r$. Then
    \begin{align*} 
        \mathbb{P}(\kappa=k | \tau<\sigma) = \frac{F_{\tau}^{(k)}(1/r)}{F_{\tau}(1/r)} \sim \frac{\tilde{F}_{\tau}^{(k)}(1/r)}{\tilde{F}_{\tau}(1/r)} \quad\text{as }r\to\infty,
    \end{align*}
    where $\tilde{F}_{\tau}$ and $\tilde{F}_{\tau}^{(k)}$ denote the short-time behavior of $F_{\tau}$ and $F_{\tau}^{(k)}$, respectively,
    \begin{align*}
        F_{\tau}(t) &\sim \tilde{F}_{\tau}(t)\quad \text{as }t\to 0^+,\\
        F_{\tau}^{(k)}(t) &\sim \tilde{F}_{\tau}^{(k)}(t)\quad \text{as }t\to 0^+.
    \end{align*}
\end{theorem}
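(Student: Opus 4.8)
The plan is to exploit the fact that a deterministic reset time collapses the Riemann--Stieltjes integrals in \eqref{pdef} and \eqref{pkdef} to single point evaluations of the distribution functions, after which the claimed identity and its asymptotic refinement both become essentially bookkeeping. First I would record the survival function of $\sigma$. Since $\sigma = 1/r$ is deterministic, its survival probability is the indicator $S_\sigma(t) = \mathbb{P}(1/r > t) = \mathbf{1}_{\{t < 1/r\}}$, which equals $1$ on $[0,1/r)$ and vanishes on $[1/r,\infty)$.

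Next I would substitute this indicator into the integral representations \eqref{pdef} and \eqref{pkdef}. Since $S_\sigma$ is the indicator of $[0,1/r)$, each Riemann--Stieltjes integral collapses to the mass of the corresponding distribution function on that half-open interval,
\begin{align*}
p = \int_0^\infty S_\sigma(t)\,\textup{d}F_\tau(t) = \mathbb{P}(\tau < 1/r), \qquad p_k = \int_0^\infty S_\sigma(t)\,\textup{d}F_\tau^{(k)}(t) = \mathbb{P}(\tau^{(k)} < 1/r),
\end{align*}
and the definition of conditional probability gives the exact ratio $\mathbb{P}(\kappa = k | \tau<\sigma) = p_k/p$. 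The only point requiring care is the endpoint $t = 1/r$: because the conditioning event is the strict inequality $\{\tau < \sigma\}$, these quantities are in general $F_\tau((1/r)^-)$ and $F_\tau^{(k)}((1/r)^-)$, which coincide with $F_\tau(1/r)$ and $F_\tau^{(k)}(1/r)$ precisely when $\tau$ has no atom at $1/r$. I would note that this holds for the continuous hitting-time distributions in play, so the stated exact identity $\mathbb{P}(\kappa = k | \tau<\sigma) = F_\tau^{(k)}(1/r)/F_\tau(1/r)$ follows.

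Finally I would pass to the large-$r$ limit. As $r \to \infty$ we have $1/r \to 0^+$, so the hypothesized short-time relations $F_\tau(t) \sim \tilde F_\tau(t)$ and $F_\tau^{(k)}(t) \sim \tilde F_\tau^{(k)}(t)$ as $t \to 0^+$ specialize to $F_\tau(1/r) \sim \tilde F_\tau(1/r)$ and $F_\tau^{(k)}(1/r) \sim \tilde F_\tau^{(k)}(1/r)$. Since the relation $\sim$ is preserved under quotients -- if $a \sim a'$ and $b \sim b'$ with $b, b' \neq 0$ then $a/b \sim a'/b'$, because $a/b = (a/a')(b'/b)(a'/b')$ and the first two factors tend to $1$ -- the ratio $F_\tau^{(k)}(1/r)/F_\tau(1/r)$ is asymptotic to $\tilde F_\tau^{(k)}(1/r)/\tilde F_\tau(1/r)$, which completes the argument.

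In contrast to Theorems \ref{thm3}--\ref{net_gam}, where the random reset time smears the integral and Proposition \ref{prop3}-type Laplace asymptotics are genuinely needed, the deterministic case reduces to direct point evaluation and contains no hard analytic step. The sole obstacle is the endpoint bookkeeping at $t = 1/r$, namely verifying that the strict inequality in $\{\tau < \sigma\}$ does not alter the stated identity; this is dispatched by the atom-free (continuity) observation above and is in any case immaterial to the asymptotics.
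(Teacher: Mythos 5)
Your proposal is correct and follows essentially the same route as the paper's proof: the deterministic reset time makes $S_\sigma$ an indicator, so $p$ and $p_k$ collapse to point evaluations $F_\tau(1/r)$ and $F_\tau^{(k)}(1/r)$ (the paper phrases this via an $\epsilon$--$\delta$ sandwich $(1\pm\epsilon)\tilde F_\tau$ and notes $I_{1/r,\infty}\equiv 0$, which is just the bookkeeping version of your direct evaluation), after which the short-time asymptotics pass to the ratio exactly as you argue. Your remark on the endpoint atom at $t=1/r$ is a legitimate subtlety that the paper's proof silently glosses over, and your continuity observation dispatches it appropriately.
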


\section{Examples and numerical solutions} \label{section3}
The results stated in section \ref{section2} provide the exact asymptotic behavior of conditional hitting probabilities in the large $r$ limit in terms of the short-time behavior of $F_{\tau}$ and $F_{\tau}^{(k)}$. In this section we illustrate these results in several specific examples and compare our analytical results to numerical solutions.

\subsection{Pure diffusion in one dimension} \label{pd}
Consider a diffusive search process with diffusivity $D>0$ in one dimension. Assume the searcher starts at $x_0\in(0,\ell/2)$ where $V_0=(-\infty,0]$ and $V_1=[\ell,\infty)$ are targets. For ease of notation, define the distances from $x_0$ to each target by \[0<L_0 := x_0 < \ell-x_0 =: L_1.\]

It has been shown in numerous works, for example \cite{lawley2020dist}, that $F_{\tau}$ and $F_{\tau}^{(1)}$ satisfy \eqref{Ftau1} and \eqref{Ftauk1} where
\begin{align} \label{param}
    A = \sqrt{\frac{4D}{\pi L_0^2}},\quad B = \sqrt{\frac{4D}{\pi L_1^2}},\quad C = \frac{L_0^2}{4D},\quad C_1 = \frac{L_1^2}{4D},\quad m = n = 1/2.
\end{align}

\begin{figure}[h!]
  \centering
  \includegraphics[width=\textwidth]{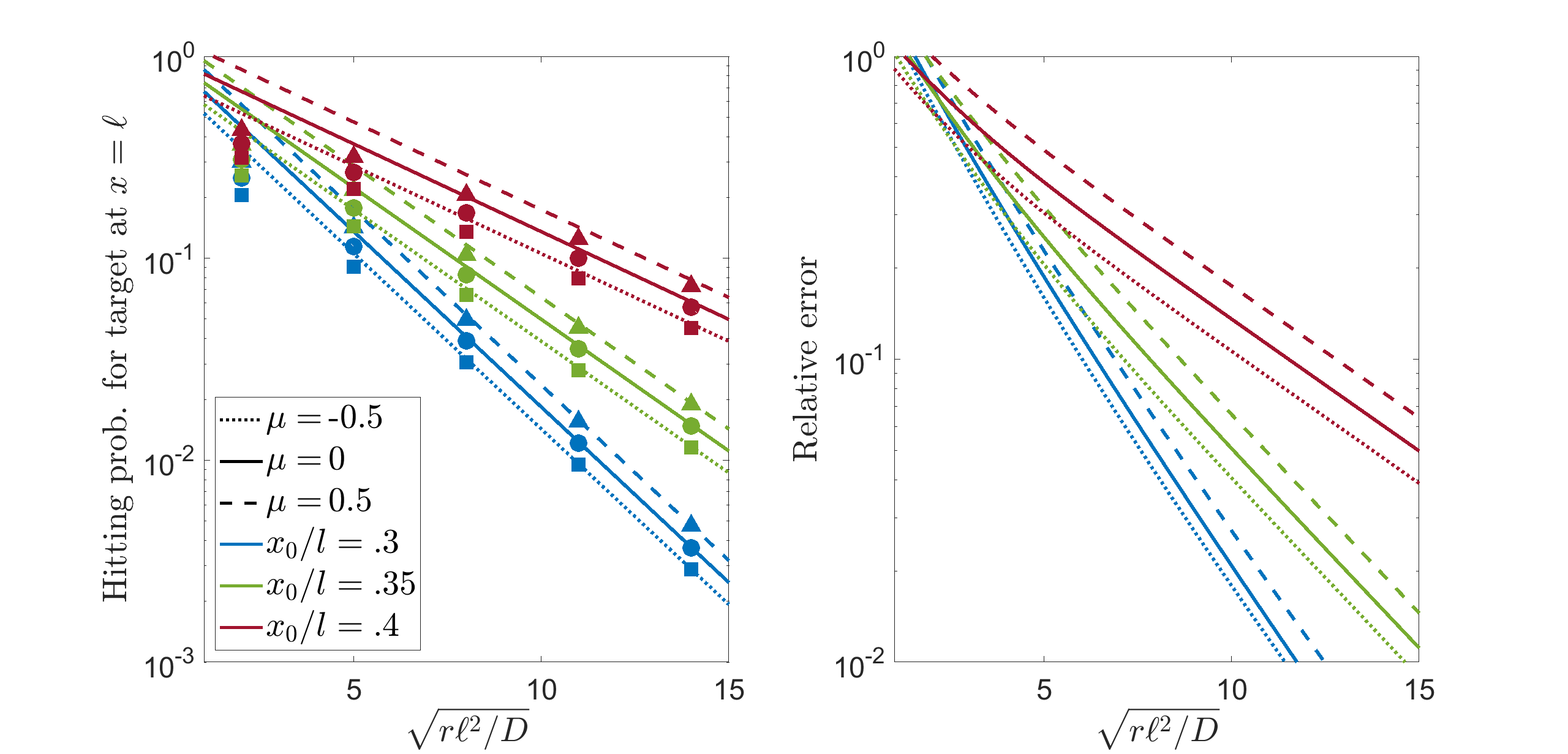}
 \caption{Conditional hitting probabilities for diffusion with diffusivity $D=1$, drift $\mu\in\mathbb{R}$, initial position $x_0\in(0,\ell/2)$, and exponentially distributed $\sigma>0$ with rate $r>0$. Curves illustrate analytical asymptotic estimates; markers denote numerical solutions. See subsections \ref{pd} and \ref{pd_drift} for details.}
 \label{fig1}
\end{figure}

\noindent
Hence, Corollary \ref{cor4} implies that for exponentially distributed $\sigma>0$ with rate $r>0$ the conditional hitting probabilities behave like 
\begin{align*}
    \mathbb{P}(\kappa=1 | \tau<\sigma) \sim e^{-\sqrt{\frac{(L_1-L_0)^2r}{D}}} = e^{-\sqrt{\frac{(\ell-2x_0)^2r}{D}}}\quad\text{as }r\to\infty.
\end{align*}
That is, the probability of the searcher finding the further target decays exponentially fast with respect to the square root of the resetting rate. In particular, notice that even if the searcher starts only slighter closer to one target, resetting can guarantee that the nearer target is found.

The asymptotic behavior of the conditional hitting probabilities for pure diffusion in the interval with numerous distributions for $\sigma$ are catalogued in Table \ref{table:1}. Figure \ref{fig1} illustrates these probabilities and compares them to numerical solutions.

\subsection{Diffusion with constant drift} \label{pd_drift}
Assuming the details of subsection \ref{pd}, consider in addition that the searcher experiences a constant drift $\mu\in\mathbb{R}$. That is, suppose that prior to a resetting event at time $\sigma>0$ the searcher position $\{X(t)\}_{t\geq 0}$ evolves according to a stochastic drift-diffusion process,
\begin{align*}
    \textup{d}X(t) = \mu \,\textup{d}t + \sqrt{2D}\,\textup{d}W(t),
\end{align*}
\noindent
where $\{W(t)\}_{t\geq 0}$ denotes a standard one-dimensional Brownian motion. It has again been previously shown \cite{Linn_2022} that $F_{\tau}$ and $F_{\tau}^{(1)}$ satisfy \eqref{Ftau1} and \eqref{Ftauk1} where
\[A = e^{-\mu L_0/2D}\sqrt{\frac{4D}{\pi L_0^2}},\quad B = e^{\mu L_1/2D}\sqrt{\frac{4D}{\pi L_1^2}},\]
and $C$, $C_1$, $m$, and $n$ in \eqref{param} are left unchanged.

Thus by Corollary \ref{cor4} where $\sigma>0$ is exponentially distributed with rate $r>0$, the conditional hitting probabilities behave like
\begin{align*}
    \mathbb{P}(\kappa=1 | \tau<\sigma) \sim e^{\frac{\mu\ell}{2D}} e^{-\sqrt{\frac{(\ell-2x_0)^2r}{D}}}\quad\text{as }r\to\infty.
\end{align*}

\begin{table}[h!]
\centering
\renewcommand{\arraystretch}{1.5}
\centerline{
\begin{tabular}{c c c c} 
 \hline \hline
   & Pure diffusion & With drift & With partially absorbing targets\\ [0.5ex] 
 \hline
 $\mathbb{P}(\kappa=1\,|\,\tau<\sigma_{\text{e}})$ & $e^{-\sqrt{(\ell-2x_0)^2r/D}}$ & $e^{-\sqrt{(\ell-2x_0)^2r/D} + (\mu\ell/2D)}$ & $\frac{\gamma_1}{\gamma_0}e^{-\sqrt{(\ell-2x_0)^2r/D}}$\\
  $\mathbb{P}(\kappa=1\,|\,\tau<\sigma_{\text{g}})$ & $\big(\frac{\ell-x_0}{x_0}\big)^{\eta-1} e^{-\sqrt{(\ell-2x_0)^2r/D}}$ & $\big(\frac{\ell-x_0}{x_0}\big)^{\eta-1} e^{-\sqrt{(\ell-2x_0)^2r/D} + (\mu\ell/2D)}$ & $\frac{\gamma_1}{\gamma_0} \big(\frac{\ell-x_0}{x_0}\big)^{\eta-1} e^{-\sqrt{(\ell-2x_0)^2r/D}}$\\ 
 $\mathbb{P}(\kappa=1\,|\,\tau<\sigma_{\text{u}})$ & $\frac{x_0^3}{(\ell-x_0)^3} e^{-(\ell-2x_0)r\ell/(8D)}$ & $\frac{x_0^3}{(\ell-x_0)^3} e^{-(\ell-2x_0)r\ell/(8D)+(\mu\ell/2D)}$ & $\frac{\gamma_1}{\gamma_0}\frac{x_0^4}{(\ell-x_0)^4} e^{-(\ell-2x_0)r\ell/(8D)}$\\ 
 $\mathbb{P}(\kappa=1\,|\,\tau<\sigma_{\text{s}})$ & $\frac{x_0}{\ell-x_0} e^{-(\ell-2x_0)r\ell/(4D)}$ & $\frac{x_0}{\ell-x_0} e^{-(\ell-2x_0)r\ell/(4D)+(\mu\ell/2D)}$ & $\frac{\gamma_1}{\gamma_0}\frac{x_0^2}{(\ell-x_0)^2} e^{-(\ell-2x_0)r\ell/(4D)}$\\ [1ex] 
 \hline \hline
\end{tabular}
}
\caption{Conditional hitting probabilities on the interval for types of diffusive search and various resetting distributions. Subscripts `e', `g', `u', and `s' on $\sigma>0$ denote exponential, gamma, uniform, and sharp distributions. See sections \ref{pd}-\ref{pd_abs} for details.}
\label{table:1}
\end{table}
\noindent
Qualitatively, we observe exponential decay of the hitting probability to the further target akin to the case of pure diffusion in subsection \ref{pd}. The sole difference in this case is a prefactor exhibiting the influence of drift, which increases the hitting probability of the target in its direction. Still, this influence is limited: in the limit as $r\to\infty$ the hitting probability to the nearest target goes to one independent of drift directionality.

The asymptotic conditional hitting probabilities for diffusion with drift in the interval with numerous distributions for $\sigma$ are catalogued in Table \ref{table:1}. Figure \ref{fig1} illustrates these probabilities and compares them to numerical solutions.

\subsection{Diffusion with partially absorbing targets} \label{pd_abs}
In the previous two examples, we assumed that the target was `found' by the searcher instantaneously upon arrival. That is, the so-called trapping rate was infinite. Alternatively, one may be interested in search processes wherein targets are found only after the searcher has spent sufficient time nearby. That is, for the unconditional search process with partially absorbing targets, the FPT is given by
\begin{align*}
    \tau_{\text{partial}} := \inf\{ t>0 : \lambda_k > \xi_k/\gamma_k\text{ for some }k\in\{0,\dots,K-1\} \}
\end{align*}
where $\{\xi_k\}_{k=0}^{K-1}$ are unit rate exponential random variables, $\{\gamma_k\}_{k=0}^{K-1}$ are non-negative trapping rates, and $\lambda_k$ is the local time of $X(t)$ on target $V_k$ \cite{grebenkov2020}.

Now in addition to the details of subsection \ref{pd}, suppose the trapping rate of the target at $x=0$ is given by $\gamma_0\in(0,\infty)$ and that of $x=\ell$ is $\gamma_1\in(0,\infty)$. To be precise, suppose the survival probability for the search process, whose backward Kolmogorov equation is given by the one-dimensional heat equation,
\begin{align*}
    \partial_t S = D\partial_{xx} S, \quad x\in(0,\ell),
\end{align*}
with initial condition $S(x,t=0) = 1$, has Robin boundary conditions,
\begin{align*}
    D\partial_x S &= \gamma_0 S,\quad x = 0,\\
    -D\partial_x S &= \gamma_1 S,\quad x = \ell.
\end{align*}

Solving this system and taking the limits as right and left boundaries of the interval extend out to infinity yield the short time behavior of $F_{\tau}$ and $F_{\tau}^{(1)}$, respectively. These expressions have been shown \cite{lawley2020dist} to satisfy \eqref{Ftau1} and \eqref{Ftauk1} with \[A = \frac{2\gamma_0}{L_0}\sqrt{\frac{4D}{\pi L_0^2}},\quad B = \frac{2\gamma_1}{L_1}\sqrt{\frac{4D}{\pi L_1^2}},\quad m = n = \frac{3}{2}\]
with $C$ and $C_1$ unchanged from \eqref{param}.

\begin{figure}[h!]
  \centering
  \includegraphics[width=\textwidth]{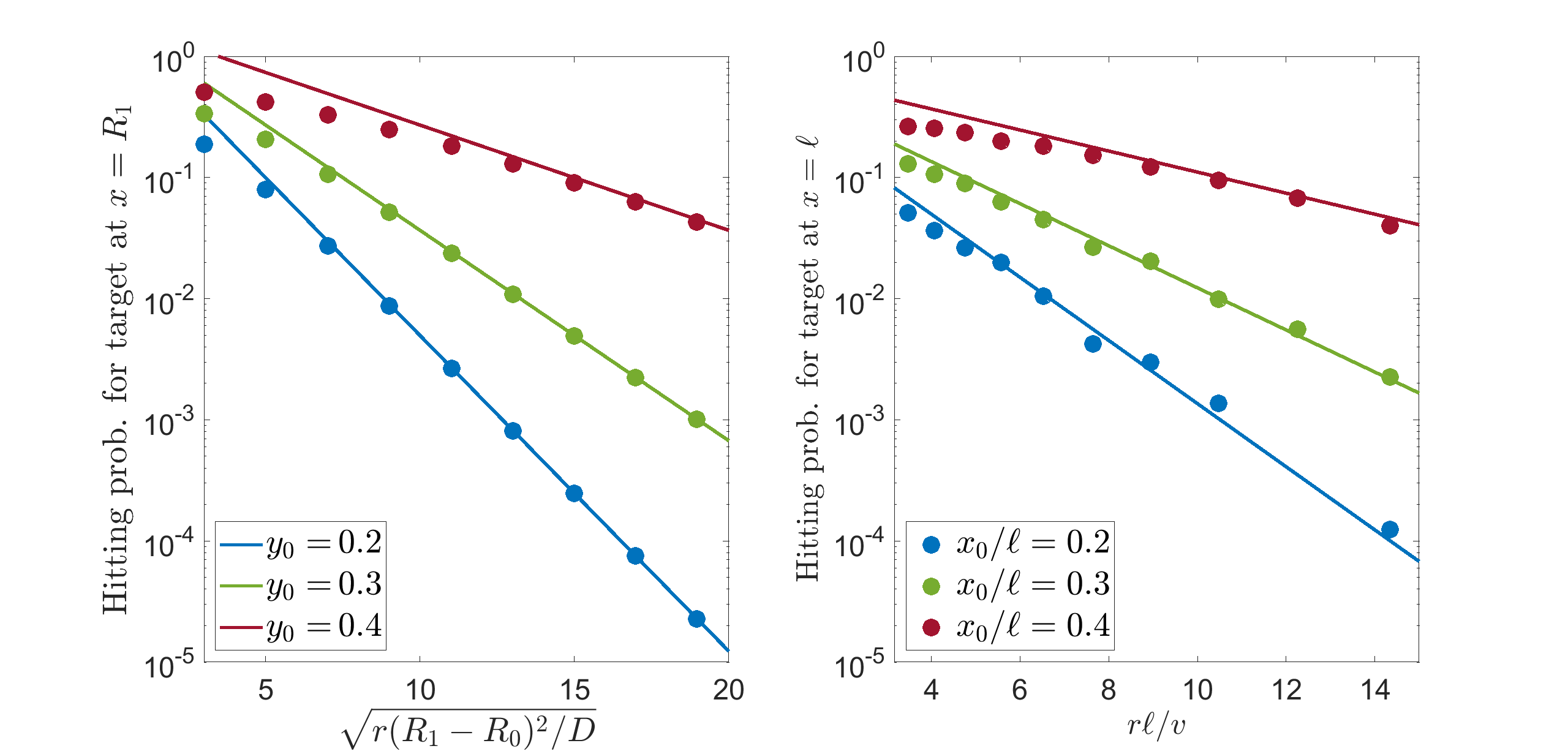}
 \caption{Left: Conditional hitting probabilities for diffusion with diffusivity $D>0$ between concentric spheres of radii $R_0$ and $R_1$ and exponentially distributed $\sigma>0$ with rate $r>0$. Above, $y_0:=(x_0-R_0)/(R_1-R_0)$. Curves illustrate analytical asymptotic estimates; markers denote numerical solutions. See subsection \ref{conc_sph} for details. Right: Conditional hitting probabilities for an RTP with velocity $v>0$, tumble rate $\lambda>0$, and exponentially distributed $\sigma>0$ with rate $r>0$. Curves illustrate analytical asymptotic estimates; markers denote simulations. See subsection \ref{RTP} for details.}
 \label{fig2}
\end{figure}

Corollary \ref{cor4} thus implies that, for exponentially distributed $\sigma>0$ with rate $r>0$, the conditional hitting probabilities behave like
\begin{align*}
    \mathbb{P}(\kappa=1 | \tau<\sigma) \sim \frac{\gamma_1}{\gamma_0}e^{-\sqrt{\frac{(\ell-2x_0)^2r}{D}}} \quad\text{as }r\to\infty.
\end{align*}

\noindent
Qualitatively, we observe exponential decay of the hitting probability to the further target again akin to the case of pure diffusion in subsection \ref{pd}. The prefactor in this case illustrates the relationship between the trapping rates at each target: a relatively high trapping rate at the further target increases its corresponding hitting probability but this influence is always exceeded by the exponential decay in the limit as $r\to\infty$.

The asymptotic conditional hitting probabilities for pure diffusion with partially absorbing targets in the interval with numerous distributions for $\sigma$ are catalogued in Table \ref{table:1}.

\subsection{Diffusion between concentric spheres} \label{conc_sph}
Consider a diffusive search process with diffusivity $D>0$ in three spatial dimensions between two perfectly absorbing concentric spheres of radii $R_1>R_0>0$. Let the initial position of the searcher be $x_0 := \|X(0)\| \in (R_0,(R_0+R_1)/2)$ where $\|\cdot\|$ denotes the Euclidean norm. In words, suppose the searcher starts closer to the inner sphere. For ease of notation, define the distances from $x_0$ to each target by
\begin{align*}
    0 < L_0 := x_0 - R_0 < R_1 - x_0 =: L_1.
\end{align*}

\begin{table}[h!]
\centering
\renewcommand{\arraystretch}{1.5}
\centerline{
\begin{tabular}{c c c c} 
 \hline \hline
   & Diffusion between concentric spheres & RW on a network & RTP in an interval\\ [0.5ex] 
 \hline
 $\mathbb{P}(\kappa=1\,|\,\tau<\sigma_{\text{e}})$ & $\frac{R_1}{R_0} e^{-\sqrt{(R_0+R_1-2x_0)^2r/D}}$ & $\frac{\Lambda_1}{\Lambda_0}r^{m-n}$ & $\frac{1-q}{q} e^{-(\lambda+r)(\frac{L_1}{v_1}-\frac{L_0}{v_0})}$ \\
  $\mathbb{P}(\kappa=1\,|\,\tau<\sigma_{\text{g}})$ & $\Big(\frac{R_1-x_0}{x_0-R_0}\Big)^{\eta-1} e^{-\sqrt{(R_0+R_1-2x_0)^2r/D}}$ & $\frac{\Gamma(n+\eta)m!\Lambda_1}{\Gamma(m+\eta)n!\Lambda_0} r^{m-n}$ & $\frac{1-q}{q}\left( \frac{L_1v_0}{L_0v_1} \right)^{\eta-1} e^{-(\lambda+r)(\frac{L_1}{v_1}-\frac{L_0}{v_0})}$ \\ 
 $\mathbb{P}(\kappa=1\,|\,\tau<\sigma_{\text{u}})$ & $\frac{R_1(x_0-R_0)^3}{R_0(R_1-x_0)^3}e^{-r(R_0+R_1-2x_0)(R_1-R_0)/(8D)}$ & $\frac{(m+1)!\Lambda_1}{(n+1)!\Lambda_0} \big(\frac{r}{2}\big)^{m-n}$ & \\ 
 $\mathbb{P}(\kappa=1\,|\,\tau<\sigma_{\text{s}})$ & $\frac{R_1(x_0-R_0)}{R_0(R_1-x_0)}e^{-r(R_0+R_1-2x_0)(R_1-R_0)/(4D)}$ & $\frac{m!\Lambda_1}{n!\Lambda_0} r^{m-n}$ & \\ [1ex] 
 \hline \hline
\end{tabular}
}
\caption{Conditional hitting probabilities for various types of stochastic search and resetting distributions. Subscripts `e', `g', `u', and `s' on $\sigma>0$ denote exponential, gamma, uniform, and sharp distributions. See sections \ref{conc_sph}-\ref{RTP} for details.}
\label{table:2}
\end{table}

It has been derived in numerous works, for example \cite{Linn_2022}, that $F_{\tau}$ and $F_{\tau}^{(1)}$ satisfy \eqref{Ftau1} and \eqref{Ftauk1} where
\[A = \frac{R_0}{x_0}\sqrt{\frac{4D}{\pi L_0^2}},\quad B = \frac{R_1}{x_0}\sqrt{\frac{4D}{\pi L_1^2}},\quad C = \frac{L_0^2}{4D},\quad C_1 = \frac{L_1^2}{4D},\quad m = n = 1/2.\]
Hence, by Corollary \ref{cor4} wherein $\sigma>0$ is assumed to be exponentially distributed with rate $r>0$, the conditional hitting probabilities behave like
\begin{align*}
    \mathbb{P}(\kappa=1 | \tau<\sigma) \sim \frac{R_1}{R_0} e^{-\sqrt{\frac{(R_0+R_1-2x_0)^2r}{D}} } \quad\text{as }r\to\infty.
\end{align*}

The asymptotic conditional hitting probabilities for diffusion between concentric 3D spheres with numerous distributions for $\sigma$ are catalogued in Table \ref{table:2}. Figure \ref{fig2} illustrates these probabilities and compares them to numerical solutions.

\subsection{Random walk on a network} \label{RW_net}
Now consider the spatially discrete stochastic search process of a random walk on a discrete network. To be precise, let $\{X(t)\}_{t\geq 0}$ be a continuous-time Markov chain on a finite or countably infinite state space, $I$. We assume the jump rates are bounded to preclude the possibility of infinitely many jumps occurring in a short time on a countably infinite network. Denote by $i_0\in I$ the initial position of $X(t)$ and denote target nodes (or sets of nodes that constitute targets) by $V_k\in I$ where $i_0 \not\in \cup_{k=0}^{K-1} V_k$.

Previous work on stochastic search processes on networks implies that $F_{\tau}$ and $F_{\tau}^{(k)}$ satisfy \eqref{7a} and \eqref{7b} where
\begin{align*}
    A = \frac{\Lambda_0}{m!}, \quad B = \frac{\Lambda_k}{n!}
\end{align*}
with $\Lambda_0$ and $\Lambda_k$ denoting the products of the jump rates of $X(t)$ along the shortest path from $i_0$ to the nearest target and from $i_0$ to target $V_k$, respectively \cite{lawley2020networks}. Further, $n\geq 1$ is the minimum number of jumps between $i_0$ and $V_k$ and similarly $m\geq 1$ is the minimum number of jumps between $i_0$ and the nearest target. If there are multiple shortest paths, $\Lambda_0$ or $\Lambda_k$ are the sums of the products of the jump rates along these paths.

\begin{figure}[h!]
  \centering
  \includegraphics[width=\textwidth]{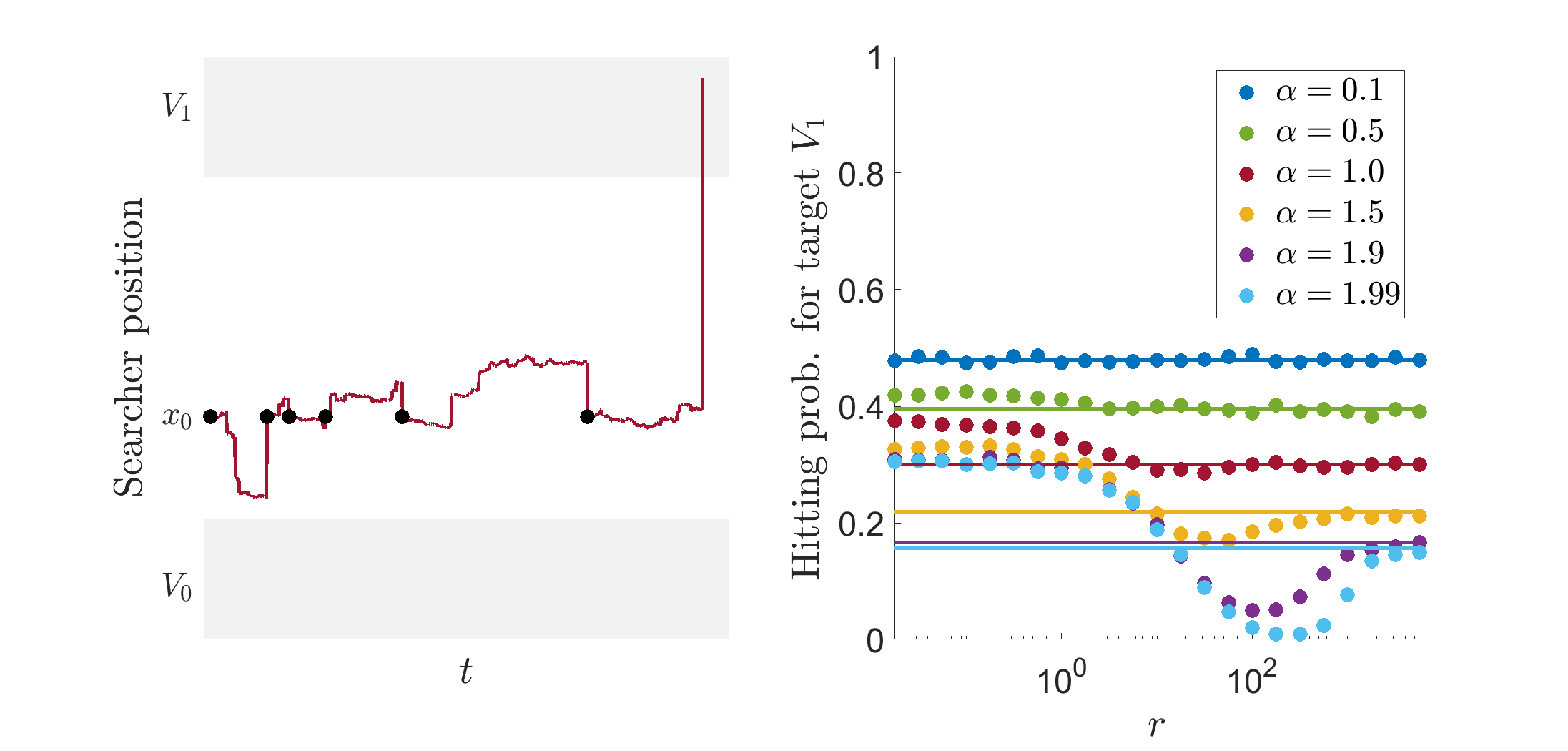}
 \caption{Left: Path of an unconditional $\alpha = 1$ L\'evy flight search. Markers denote resetting. Right: Conditional hitting probabilities for L\'evy flights with $\alpha\in(0,2)$ and exponentially distributed $\sigma>0$ with rate $r>0$. Lines illustrate analytical asymptotic estimates; markers denote simulations. See subsection \ref{superdiff} for details.}\label{levy}
\end{figure}

Theorem \ref{net_gam} thus implies that, for exponentially distributed $\sigma>0$ with rate $r>0$, the conditional hitting probabilities behave like
\begin{align*}
    \mathbb{P}(\kappa=k | \tau<\sigma) \sim \frac{\Lambda_k}{\Lambda_0} r^{m-n} \quad\text{as }r\to\infty.
\end{align*}

The asymptotic conditional hitting probabilities for random walks on networks with numerous distributions for $\sigma$ are catalogued in Table \ref{table:2}.

\subsection{Superdiffusive L\'evy flight in one dimension} \label{superdiff}
Suppose now the stochastic search process is a superdiffusive L\'evy flight starting from $x_0\in(0,\ell/2)$ where $V_0=(-\infty,0]$ and $V_1=[\ell,\infty)$ are targets.

A useful characterization of a superdiffusive L\'evy flight in this setting is a subordinated Brownian motion, which is a random time change of a standard Brownian motion. In particular, if $\{B(s)\}_{s\geq 0}$ is a standard Brownian motion with unit diffusivity and $\{S(t)\}_{t\geq 0}$ is a non-decreasing L\'evy process with $S(0)=0$. Denote the path of the search process by
\begin{align*}
    X(t) := B(S(t)) + x_0,\quad t\geq 0.
\end{align*}

In a particular scaling limit, the probability density $p(x,t)$ of the position of the search process satisfies
\begin{align*}
    \partial_t p = -\bar{D}(-\Delta)^{\alpha/2}p,\quad t>0,
\end{align*}
where $\bar{D}>0$ is the generalized diffusivity and $(-\Delta)^{\alpha/2}$ is the fractional Laplacian with $\alpha\in(0,2)$ \cite{lischke2020}. Assuming $x_0\notin V_0\cup V_1$, one can show that $F_{\tau}$ and $F_{\tau}^{(1)}$ satisfy \eqref{7a} and \eqref{7b} with
\begin{align*}
    A = \frac{\Gamma(\alpha)\bar{D} \sin(\alpha\pi/2)}{\pi (\ell-x_0)^{\alpha}} + \frac{\Gamma(\alpha)\bar{D} \sin(\alpha\pi/2)}{\pi x_0^{\alpha}} ,\quad B = \frac{\Gamma(\alpha)\bar{D} \sin(\alpha\pi/2)}{\pi (\ell-x_0)^{\alpha}}
\end{align*}
and $m=n=1$ \cite{palyulin2019,lawley2021super}. Hence, by Corollary \ref{cor8} with gamma distributed $\sigma>0$ with shape $\eta>0$ and rate $r>0$, the conditional hitting probabilities behave like
\begin{align*}
    \mathbb{P}(\kappa=1 | \tau<\sigma) \sim \frac{x_0^{\alpha}}{x_0^{\alpha} + (\ell-x_0)^{\alpha}}\quad\text{as }r\to\infty.
\end{align*}

This result differs substantially from the previously considered examples; not only is the asymptotic hitting probability independent of the (generalized) diffusivity but it is also independent of the resetting rate. That is, regardless of the search speed, there is strictly positive probability that the search finds a far target in lieu of the nearest one.

\subsection{Run and tumble in one dimension} \label{RTP}
We conclude our examples with an application to a stochastically resetting run and tumble particle (RTP). Recent studies on this process concern steady-state behavior in the absence of targets \cite{paul_RTP,KS_RTP} and first passage times in the presence of targets \cite{Tucci_RTP}. Here we consider the hitting probabilities of an RTP in one dimension between targets at $V_0=(-\infty,0]$ and $V_1=[\ell,\infty)$ and initial position $x_0\in(0,\ell)$. We define $L_0:=x_0$ and $L_1:=\ell-x_0$. The searcher has probability $q\in[0,1]$ of initially moving in the negative x-direction and switches between velocities $v_1>0$ and $-v_0<0$ at Poissonian rate $\lambda>0$. Moreover, we assume $L_0/v_0 < L_1/v_1$. Ultimately by way of Theorem \ref{net_gam}, we determine that, for exponentially distributed $\sigma>0$ with rate $r>0$, the conditional hitting probabilities behave like
\begin{align*}
    \mathbb{P}(\kappa = 1 | \tau < \sigma) \sim \frac{1-q}{q} e^{-(\lambda+r)(L_1/v_1-L_0/v_0)}\quad\text{as }r\to\infty.
\end{align*}

The details of this calculation are contained in section \ref{rrttpp} of the Appendix. Moreover, the asymptotic conditional hitting probabilities for one-dimensional RTPs with numerous distributions for $\sigma$ are catalogued in Table \ref{table:2}.  Figure \ref{fig2} illustrates these probabilities and compares them to numerical solutions.

\section{Application to a decision-making model} \label{section3b}
In this section, we apply Theorem \ref{sharp} to an evidence accumulation model for decision making, which is a model type that is widely used to describe how individuals gather information to make decisions \cite{acemoglu2011,karamched2020prl,karamched2020,reina2023,linn2024fast}. In particular, it has been shown that the log-likelihood ratio describing the belief state of an individual can be approximated by a drift-diffusion process \cite{busemeyer1993,bogacz2006, banburismus2002}. Within this framework we determine the asymptotic behavior of an individual who makes a decision before a fast deterministic deadline. In the fast decision limit, we show that the individual decides in agreement with her initial bias. How this limit depends on the `tightness' of the deadline is detailed below.

Consider an individual gathering information according to 
\begin{align*}
    \text{d}X(t) = \mu\, \text{d}t + \sqrt{2D}\,\text{d}W(t),
\end{align*}
where the drift parameter, $\mu\in\mathbb{R}$, represents bias toward a particular decision and the diffusivity, $D>0$, denotes the magnitude of the Brownian motion, $\{W(t)\}_{t\geq 0}$, which describes the noisiness in the information gathering process. Suppose first that the individual is deciding between two choices and that a choice is made when the drift-diffusion process reaches one of two thresholds, $-\theta$ or $\theta$, for fixed $\theta>0$. Suppose further that the individual has an initial bias of $x_0\in(0,\theta)$.

\begin{figure}[h!]
  \centering
  \includegraphics[width=\textwidth]{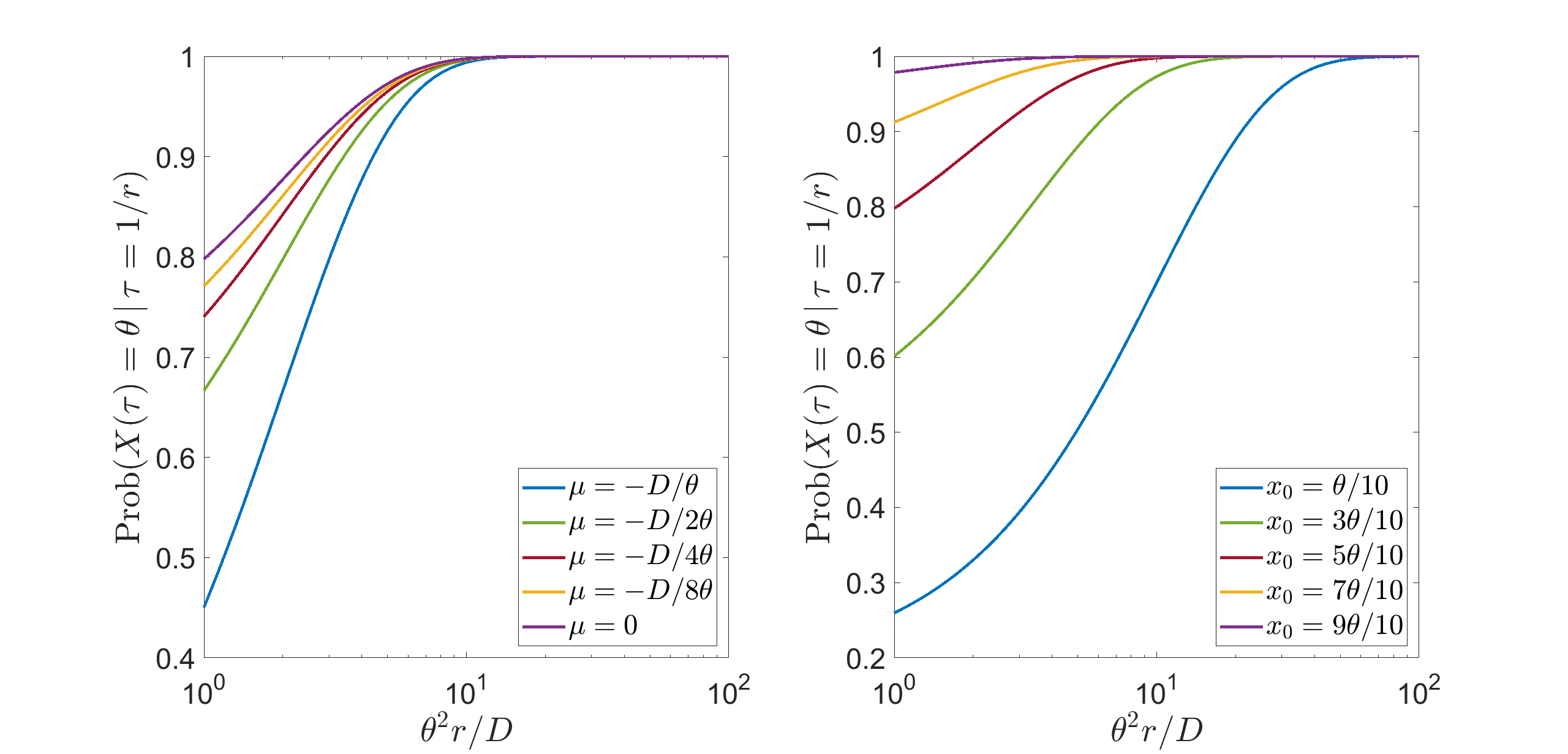}
  \caption{Left: The probability that an individual decides in agreement with her initial bias converges to unity in the tight deadline limit independent of an external bias. In this panel, $x_0=\theta/2$. Right: The probability that an individual decides in agreement with her initial bias converges to unity in the tight deadline limit independent of initial bias strength. In this panel, $\mu=0$.}
 \label{fig3}
\end{figure}

Under these assumptions, we inquire about the decision made by the individual before a sharp (deterministic) deadline $\sigma_s \equiv 1/r$ for fixed $r>0$. We emphasize that in this setting the individual has no knowledge of the deadline but nevertheless decides quickly. By Theorem \ref{sharp}, where $\tilde{F}_{\tau}$ and $\tilde{F}^{(k)}_{\tau}$ given in subsection \ref{pd_drift} with $L_0 = \theta-x_0$, $L_k = \theta+x_0$,
\begin{align*}
    A = e^{-\mu L_0/2D}\sqrt{\frac{4D}{\pi L_0^2}},\quad B = e^{\mu L_k/2D}\sqrt{\frac{4D}{\pi L_k^2}},\quad C = \frac{L_0^2}{4D},\quad C_k = \frac{L_k^2}{4D},
\end{align*}
and $m=n=1/2$, we determine that the probability of the individual deciding against its initial bias is
\begin{align} \label{lim}
    \mathbb{P}(X(\tau)=-\theta\,|\,\tau=1/r) \sim \frac{\theta-x_0}{\theta+x_0} e^{-\theta\mu/D} e^{-\theta x_0r/D}\quad\text{as }r\to\infty.
\end{align}

Qualitatively, we infer that quick decisions are biased decisions. That is, the individual will always decide in agreement with her initial bias in the tight deadline limit, even in the presence of an external bias `pushing' the agent toward the alternative. That is, regardless of external bias,
\begin{align*}
    \mathbb{P}(X(\tau)=\theta\,|\,\tau=1/r) \to 1\quad\text{as }r\to\infty. 
\end{align*}

While such hasty decisions may not be made frequently by a single agent, first deciders among groups of agents are reliably quick and similarly biased \cite{linn2024fast}. The influence of these early decisions on agents still deliberating can be substantial; recent work suggests that having many fast deciders can sway the group to the same hasty decision, but having only a few fast deciders can caution remaining agents against biased decision-making \cite{karamched2020prl}.

Unsurprisingly, the presence of a disagreeable external drift does slow the convergence to zero of the probability in \eqref{lim}. We illustrate this behavior in the left panel of Figure \ref{fig3} to emphasize its limited influence. Relatively weak initial biases also slow the convergence to zero of the probability in \eqref{lim} but its influence is similarly limited. We illustrate this behavior in the right panel of Figure \ref{fig3}. Finally we remark that these results extend to higher dimensional landscapes for decision making. In these settings, the lengths $L_0$ and $L_k$ denote the geodesic lengths to the nearest decision threshold and another decision threshold of interest.

\section{Discussion} \label{section5}
In this work, we determine hitting probabilities for fast stochastic search. Our results yield the exact asymptotics of these hitting probabilities in terms of the short-time behavior of the stochastic search process without a time constraint. In typical scenarios of diffusive search the limit of these asymptotic quantities reveals that the nearest target is always found. In particular, hitting probabilities to far targets exhibits exponential decay. This behavior readily breaks down for certain non-diffusive search processes including random walks on networks and superdiffusive search. In the random walk case, hitting probabilities to far targets exhibit polynomial decay. Superdiffusive search, however, can always find far targets quickly. That is, all limiting hitting probabilities associated with superdiffusive search are strictly positive. Hence, with sufficiently frequent resetting, modulation of the search process and various system parameters are an effective means of `sharpening' hitting probabilities to achieve a specific goal.

This work adds a new piece to the vast and growing body of work on stochastic search processes and, in particular, fast stochastic search. Until recently, the remarkable acuity of timing and directionality in biological processes was poorly understood. The so-called redundancy principle has since provided an explanation for processes in which many nearly identical copies of bio-entities, though effectively independent, work together to increase the likelihood of fast, precise, and accurate action \cite{schuss2019}. In the case of $N\gg 1$ independent searchers, the moments of the fastest FPT for typical scenarios of diffusive search are given by \cite{lawley2020uni}
\begin{align*}
    \mathbb{E}[T_N^m] \sim \left(\frac{L_0^2}{4D\ln N}\right)^m\quad\text{as }N\to\infty,
\end{align*}
where $D>0$ is the diffusivity and $L_0>0$ is a geodesic length to the nearest target. The associated hitting probabilities to far targets are \cite{Linn_2022}
\begin{align*}
    \mathbb{P}(\kappa_N = k) \sim \beta(\ln N)^{((L_k/L_0)-1)/2}N^{1-(L_k/L_0)^2}\quad\text{as }N\to\infty
\end{align*}
where $\beta>0$ is a constant dependent on system parameters and $L_k>0$ is a geodesic length to a far target. In the case of evidence accumulation models for decision making, the analogous hitting probability result indicates that the fastest decider is she who had the strongest initial bias \cite{linn2024fast}.

Where redundancy is not at play, or perhaps working simultaneously with redundancy, stochastic inactivation or resetting serves a similar biological purpose. Recent work in this area characterizes when infrequent stochastic resetting increases the hitting probability or decreases the mean FPT to particular targets \cite{PhysRevE.102.022115}. For typical scenarios of diffusive search, the moments of the FPT under frequent exponentially distributed resetting with rate $r\gg L_0^2/4D$ is \cite{FPTuFSR}
\begin{align*}
    \mathbb{E}[T_{reset}^m] \sim \left(\frac{e^{L_0\sqrt{r/D}}}{r}\right)^m\quad\text{as }r\to\infty.
\end{align*}
Moreover, the same quantity for an inactivating searcher is \cite{lawley2021mortal}
\begin{align*}
    \mathbb{E}[T_{inact}^m ] \sim \left(\frac{L_0^2}{4Dr}\right)^{m/2}\quad\text{as }r\to\infty.
\end{align*}
The associated hitting probabilities to far targets, as derived in this work, are
\begin{align*}
    \mathbb{P}(\kappa_r = k) \sim \eta e^{-(L_k-L_0)\sqrt{r/D}} \quad\text{as }r\to\infty
\end{align*}
where $\eta>0$ is a constant depending on system parameters. In terms of the decision making model discussed in section \ref{section3b}, the analogous hitting probability results indicate that fast decisions agree with an individual's initial bias. There have been numerous recent studies quantifying the `cost' associated with redundancy, resetting, and inactivation and analyzing its effects \cite{Bhat2016,talfriedman2020, debruyne2023,Sunil2023}. While we do not consider such a cost in this work, we acknowledge the potential influence on system behavior and hence its importance for certain biological and decision-making applications.

To conclude, we have shown that the hitting probabilities of fast stochastic search can differ tremendously from those free of time constraints. These results build on the growing body of work seeking to understand the nature of stochastic search in biology, sociology, and other scientific disciplines. In particular, we show that while hitting probabilities for typical scenarios of diffusive search decay exponentially fast in the frequent resetting limit, other search processes exhibit notable quantitative and qualitative differences. It is also the case, as highlighted in this work, that while hitting probabilities for unconstrained stochastic search can prove challenging to compute, especially in complicated domains, hitting probabilities for fast stochastic search rely on very few details of the system and can often be easily determined.

\newpage
\section{Appendix}
This appendix contains proofs of the propositions and theorems stated in section \ref{section2} as well as details of the numerical methods used in section \ref{section3}.

\subsection{Proofs}
\begin{proof}[Proof of Proposition \ref{prop3}]
    The proof of this proposition is in the Appendix of \cite{FPTuFSR}. We repeat it here for readability: The exponential term in the integrand of \eqref{prop3eq} achieves its maximum at
    \begin{align*}
        t^* := \sqrt{C/r}.
    \end{align*}
    Let $r>0$ be sufficiently large so that $t^*\in(0,\delta)$. Under the change of variables $s=t/t^*$, the integral in \eqref{prop3eq} becomes
    \begin{align} \label{laplace1}
        \int_0^{\delta} e^{-rt} t^b e^{-C/t} \,\textup{d}t = \left(\frac{C}{r}\right)^{\frac{b+1}{2}} \int_0^{\delta/t^*} s^b \text{exp}\left(-\sqrt{Cr}\left(s + s^{-1} \right) \right)\,\textup{d}t.
    \end{align}
    Applying Laplace's method to \eqref{laplace1} where $s=1\in(0,\delta/t^*)$ corresponds to the maximum of the exponential term in the integrand completes the proof. 
\end{proof}

\noindent
\begin{proof}[Proof of Theorem \ref{thm3}]
    Let $\epsilon>0$ and define
    \begin{align} \label{Iab}
        I_{a,b} := \int_a^b S_{\sigma}(t)\,\textup{d}F_{\tau}(t).
    \end{align}
    By the assumption in \eqref{Ftau1}, there exists $\delta>0$ so that
    \begin{align*}
        (1-\epsilon)At^me^{-C/t} \leq F_{\tau}(t) \leq (1+\epsilon)At^me^{-C/t}\quad\text{for all }t\in(0,\delta).
    \end{align*}
    Under this assumption, we can integrate by parts and bound $I_{0,\delta}$ from above,
    \begin{align} \label{t3eq}
    \begin{split}
        I_{0,\delta} = S_{\sigma}(\delta) &F_{\tau}(\delta       ) - \int_0^{\delta} F_{\tau}(t)\,\textup{d}S_{\sigma}(t) = S_{\sigma}(\delta) F_{\tau}(\delta) + \int_0^{\delta} F_{\tau}(t) f_{\sigma}(t)\,\textup{d}t \\
        &\leq \left( 1 - \frac{\gamma(\eta,r\delta)}{\Gamma(\eta)} \right) F_{\tau}(\delta) + \frac{(1+\epsilon)Ar^{\eta}}{\Gamma(\eta)} \int_0^{\delta} t^{m+\eta-1}e^{-(C/t)-rt}\,\textup{d}t,
    \end{split}
    \end{align}
    where $\gamma(s,x) := \int_0^x t^{s-1} e^{-t}\,\textup{d}t$ denotes the lower incomplete gamma function. The first term on the right-hand side of \eqref{t3eq} decays exponentially fast as $r\to\infty$. To control $I_{\delta,\infty}$, we use integration by parts to write
\begin{align*}
    I_{\delta,\infty} &= -S_{\sigma}(\delta) F_{\tau}(\delta) + \int_{\delta}^{\infty} F_{\tau}(t)\,\text{d}F_{\sigma}(t)\nonumber\\
    &\leq -S_{\sigma}(\delta) F_{\tau}(\delta) + \int_{\delta}^{\infty} \,\text{d}F_{\sigma}(t)\nonumber\\
    &= S_{\sigma}(\delta) (1-F_{\tau}(\delta)),
\end{align*}
which decays at least exponentially fast as $r\to\infty$. We emphasize that this bound holds in general so long as $\tau$ and $\sigma$ satisfy the assumptions in section \ref{prelim}.  Hence, recalling the definition of $p$ in \eqref{pdef} and applying Proposition \ref{prop3} yields
    \begin{align*}
        \limsup_{r\to\infty} \frac{p}{(A/\Gamma(\eta)) \sqrt{\pi} C^{(2m+1)/4)} r^{(4\eta-2m-3)/4} e^{-\sqrt{4Cr}}} \leq 1 + \epsilon.
    \end{align*}
    An analogous argument gives the lower bound
    \begin{align*}
        \liminf_{r\to\infty} \frac{p}{(A/\Gamma(\eta)) \sqrt{\pi} C^{(2m+1)/4)} r^{(4\eta-2m-3)/4} e^{-\sqrt{4Cr}}} \geq 1 - \epsilon.
    \end{align*}
    Repeating this procedure for $p_k$ as defined in \eqref{pkdef} yields similar relations,
    \begin{align*}
        \limsup_{r\to\infty} \frac{p_k}{(B/\Gamma(\eta)) \sqrt{\pi} C_k^{(2n+1)/4)} r^{(4\eta-2n-3)/4} e^{-\sqrt{4C_kr}}} \leq 1 + \epsilon,\\
        \liminf_{r\to\infty} \frac{p_k}{(B/\Gamma(\eta)) \sqrt{\pi} C_k^{(2n+1)/4)} r^{(4\eta-2n-3)/4} e^{-\sqrt{4C_kr}}} \geq 1 - \epsilon.
    \end{align*}
    Since $\epsilon\in(0,1)$ is arbitrary, recalling that
    \begin{align*}
        \mathbb{P}(\kappa=k | \tau<\sigma) = \frac{p_k}{p} = \frac{\int_0^{\infty} S_{\sigma}(t)\,\textup{d}F^{(k)}_{\tau}(t)}{\int_0^{\infty} S_{\sigma}(t)\,\textup{d}F_{\tau}(t)}
    \end{align*}
    completes the proof.
\end{proof}

\noindent
\begin{proof}[Proof of Theorem \ref{thm_uni}]
    Let $\epsilon>0$. From \eqref{Ftau1}, there exists $\delta>0$ so that
    \begin{align*}
        (1-\epsilon)At^me^{-C/t} \leq F_{\tau}(t) \leq (1+\epsilon)At^me^{-C/t}\quad\text{for all }t\in(0,\delta).
    \end{align*}
    Let $r>0$ be sufficiently large so that $2/r<\delta$ and let $I_{a,b}$ be as defined in \eqref{Iab}. Integrating by parts to bound $I_{0,2/r}$ from above, we have
    \begin{align} \label{02r}
    \begin{split}
        I_{0,2/r} &= -\int_0^{2/r} F_{\tau}(t)\,\textup{d}S_{\sigma}(t) = \int_0^{2/r} F_{\tau}(t)f_{\sigma}(t)\,\textup{d}t\\
        &\leq \frac{(1+\epsilon)Ar}{2} \int_0^{2/r} t^m e^{-C/t} \,\textup{d}t = \frac{(1+\epsilon)Ar}{2} C^{m+1} \Gamma(-m-1,Cr/2)
    \end{split}
    \end{align}
    where $\Gamma(s,x):= \int_x^{\infty} t^{s-1} e^{-t}\,\textup{d}t$ denotes the upper incomplete gamma function. Applying the asymptotic relation $\Gamma(s,x) \sim x^{s-1} e^{-x}$ as $x\to\infty$ to the rightmost side of \eqref{02r} and observing from the proof of Theorem \ref{thm3} that $I_{2/r,\infty}$ decays exponentially fast, we infer that
    \begin{align*}
        \limsup_{r\to\infty} \frac{p}{(A/C)(r/2)^{-m-1}e^{-Cr/2}} \leq 1 + \epsilon.
    \end{align*}
    An analogous argument gives the lower bound
    \begin{align*}
        \liminf_{r\to\infty} \frac{p}{(A/C)(r/2)^{-m-1}e^{-Cr/2}} \geq 1 - \epsilon.
    \end{align*}
    Repeating this procedure for $p_k$ yields similar relations,
    \begin{align*}
        \limsup_{r\to\infty} \frac{p_k}{(B/C_k)(r/2)^{-n-1}e^{-C_kr/2}} \leq 1 + \epsilon,\\
        \liminf_{r\to\infty} \frac{p_k}{(B/C_k)(r/2)^{-n-1}e^{-C_kr/2}} \geq 1 - \epsilon.
    \end{align*}
    Since $\epsilon\in(0,1)$ is arbitrary, recalling that
    \begin{align*}
        \mathbb{P}(\kappa=k | \tau<\sigma) = \frac{p_k}{p}  = \frac{\int_0^{\infty} S_{\sigma}(t)\,\textup{d}F^{(k)}_{\tau}(t)}{\int_0^{\infty} S_{\sigma}(t)\,\textup{d}F_{\tau}(t)}
    \end{align*}
    completes the proof.
\end{proof}

\noindent
\begin{proof}[Proof of Theorem \ref{net_gam}]
    Let $\epsilon>0$. From \eqref{Ftau1}, there exists $\delta>0$ so that
    \begin{align*}
        (1-\epsilon)At^m \leq F_{\tau}(t) \leq (1+\epsilon)At^m\quad\text{for all }t\in(0,\delta).
    \end{align*}
    Under this assumption and with $I_{a,b}$ as defined in \eqref{Iab}, we can integrate by parts and bound $I_{0,\delta}$ from above,
    \begin{align} \label{t7eq}
        I_{0,\delta} &\leq \left( 1 - \frac{\gamma(\eta,r\delta)}{\Gamma(\eta)} \right) F_{\tau}(\delta) + \frac{(1+\epsilon)Ar^{\eta}}{\Gamma(\eta)} \int_0^{\delta} t^{m+\eta-1}e^{-rt}\,\textup{d}t,
    \end{align}
    where $\gamma(s,x) := \int_0^x t^{s-1} e^{-t}\,\textup{d}t$ denotes the lower incomplete gamma function. The first term on the right-hand side of \eqref{t3eq} decays exponentially fast as $r\to\infty$ and the same is true of $I_{\delta,\infty}$, as shown in the proof of Theorem \ref{thm3}. Further, the large $r$ behavior of the integral in \eqref{t7eq} is given by
    \begin{align*}
        \int_0^{\delta} t^{m+\eta-1}e^{-rt}\,\textup{d}t \sim r^{-m-\eta} \Gamma(m+\eta)\quad\text{as }r\to\infty.
    \end{align*}
    Hence,
    \begin{align*}
        \limsup_{r\to\infty} \frac{p}{Ar^{-m}\Gamma(m+\eta)/\Gamma(\eta)} \leq 1 + \epsilon.
    \end{align*}
    An analogous argument gives the lower bound
    \begin{align*}
        \liminf_{r\to\infty} \frac{p}{Ar^{-m}\Gamma(m+\eta)/\Gamma(\eta)} \geq 1 - \epsilon.
    \end{align*}
    Repeating this procedure for $p_k$ yields similar relations,
    \begin{align*}
        \limsup_{r\to\infty} \frac{p_k}{Br^{-n}\Gamma(n+\eta)/\Gamma(\eta)} \leq 1 + \epsilon,\\
        \liminf_{r\to\infty} \frac{p_k}{Br^{-n}\Gamma(n+\eta)/\Gamma(\eta)} \geq 1 - \epsilon.
    \end{align*}
    Since $\epsilon\in(0,1)$ is arbitrary, recalling that
    \begin{align*}
        \mathbb{P}(\kappa=k | \tau<\sigma) = \frac{p_k}{p} = \frac{\int_0^{\infty} S_{\sigma}(t)\,\textup{d}F^{(k)}_{\tau}(t)}{\int_0^{\infty} S_{\sigma}(t)\,\textup{d}F_{\tau}(t)}
    \end{align*}
    completes the proof.
\end{proof}

\noindent
\begin{proof}[Proof of Theorem \ref{sharp}]
    Let $\epsilon>0$. From \eqref{Ftau1}, there exists $\delta>0$ so that
    \begin{align*}
        (1-\epsilon)\tilde{F}_{\tau}(t) \leq F_{\tau}(t) \leq (1+\epsilon)\tilde{F}_{\tau}(t)\quad\text{for all }t\in(0,\delta).
    \end{align*}
    Let $r>0$ be sufficiently large so that $1/r<\delta$ and let $I_{a,b}$ be as defined in \eqref{Iab}. Integrating by parts to bound $I_{0,1/r}$ from above,
    \begin{align*} 
        I_{0,1/r} &\leq (1+\epsilon) \int_0^{1/r} \tilde{F}_{\tau}(t) \,\textup{d}S_{\sigma}(t) = (1+\epsilon)\tilde{F}_{\tau}(1/r).
    \end{align*}
    We achieve a similar result for the lower bound:
    \begin{align*} 
        I_{0,1/r} &\geq (1-\epsilon) \int_0^{1/r} \tilde{F}_{\tau}(t) \,\textup{d}S_{\sigma}(t) = (1-\epsilon)\tilde{F}_{\tau}(1/r).
    \end{align*}
    Since $I_{1/r,\infty}$ is identically zero, we infer that
    \begin{align*}
        1 - \epsilon \leq \liminf_{r\to\infty} \frac{p}{\tilde{F}_{\tau}(1/r)} \leq \limsup_{r\to\infty} \frac{p}{\tilde{F}_{\tau}(1/r)} \leq 1 + \epsilon.
    \end{align*}
    Since $\epsilon>0$ is arbitrary, repeating this procedure for $p_k$ completes the proof.
\end{proof}

\subsection{Calculations for run-and-tumble in one dimension} \label{rrttpp}
In the right panel of Figure \ref{fig2} we illustrate conditional hitting probabilities corresponding to RTP search in one dimension between targets at $x=0$ and $x=\ell$. The searcher has probability $q\in[0,1]$ of initially moving in the negative x-direction and switches between velocities $V_1>0$ and $-V_0<0$ at Poissonian rate $\lambda>0$. We assume $x_0/V_0 < (\ell-x_0)/V_1$ and compare the asymptotic hitting probabilities for $r\gg 1$ to the hitting probabilities for any $r>0$, which we approximate numerically. Below we compute the asymptotic hitting probabilities. The details of the numerical approximation in the case that $\sigma>0$ is exponentially distributed with rate $r>0$ are in the following section.

Define $L_0:=x_0$ and $L_1:=\ell-x_0$. Integrating by parts the expression for $p:=\mathbb{P}(\tau<\sigma)$ in \eqref{pdef} yields
\begin{align*}
    p = \int_0^{\infty} f_{\sigma}(t_0+t)F_{\tau}(t_0+t)\,\textup{d}t
\end{align*}
where $t_0 = L_0/V_0$ since we assume $L_0/V_0<L_1/V_1$ and $f_{\sigma}:=\textup{d}/\textup{d}t\,F_{\sigma}$ is the density of $\sigma>0$. The cumulative distribution function of $\tau$ is given by
\begin{align*}
    F_{\tau}(t_0+t) &= \mathbb{P}(\tau\leq t_0+t)\\
    &= \mathbb{P}(\tau<t_0) + \mathbb{P}(\tau=t_0) + \mathbb{P}(t_0<\tau\leq t_0+t) + \mathbb{P}(\tau=t_0+t)\\
    &= qe^{-\lambda t_0} + \mathbb{P}(t_0<\tau\leq t_0+t).
\end{align*}
Hence,
\begin{align} \label{prtp1}
    p = qe^{-\lambda t_0} S_{\sigma}(t_0) + \int_0^{\infty} f_{\sigma}(t_0+t)\mathbb{P}(t_0<\tau\leq t_0+t)\,\textup{d}t.
\end{align}
From previous work on the short-time behavior of RTPs \cite{lawley2021pdmp}, it is known that
\begin{align} \label{prtp2}
    \mathbb{P}(t_0<\tau\leq t_0+t) \sim (1-qe^{-\lambda t_0})(\alpha_0/t_0) t\quad\text{as }t\to 0^+
\end{align}
where
\begin{align} \label{prtp3}
    \alpha_0 := \frac{\lambda L_0 (\lambda L_0q - qV_0 + V_0)}{V_0(V_0+V_1)(e^{\lambda L_0/V_0} - q)}.
\end{align}

At this point, we consider separately the resetting distributions of interest. To start, suppose $\sigma>0$ is exponentially distributed with rate $r>0$. By the proof of Theorem \ref{net_gam}, we determine the large $r$ behavior of $p>0$,
\begin{align*}
    p \sim qe^{-(\lambda+r) t_0} + e^{-rt_0}(1-qe^{-\lambda t_0})\alpha_0/(rt_0) \sim q e^{-(\lambda+r)L_0/V_0} \quad\text{as }r\to\infty.
\end{align*}
Similarly, by letting $t_1 = L_1/V_1$, we find that
\begin{align*}
    p_1 \sim (1-q) e^{-(\lambda+r)L_1/V_1}\quad\text{as }r\to\infty.
\end{align*}
Hence,
\begin{align*}
    \mathbb{P}(\kappa = 1 | \tau < \sigma) \sim \frac{1-q}{q} e^{-(\lambda+r)(L_1/V_1-L_0/V_0)}\quad\text{as }r\to\infty.
\end{align*}

Now suppose $\sigma>0$ is gamma distributed with shape $\eta>0$ and rate $r>0$. Equations \eqref{prtp1}-\eqref{prtp3} hold with
\begin{align*}
    S_{\sigma}(t) = \frac{\Gamma(\eta,rt)}{\Gamma(\eta)},\quad f_{\sigma}(t) = \frac{r^{\eta}}{\Gamma(\eta)} t^{\eta-1} e^{-rt}.
\end{align*}
From the proof of Theorem \ref{net_gam},
\begin{align*}
    p &\sim qe^{-\lambda t_0}\frac{\Gamma(\eta,rt_0)}{\Gamma(\eta)} + e^{-rt_0}(1-qe^{-\lambda t_0})\alpha_0/(rt_0)^{2-\eta}\\
    &\sim \Big(\frac{rL_0}{V_0}\Big)^{\eta-1} \frac{q}{\Gamma(\eta)} e^{-(\lambda+r)L_0/V_0} \quad\text{as }r\to\infty.
\end{align*}
Similarly, by letting $t_1 = L_1/V_1$, we find that
\begin{align*}
    p_1 \sim \Big(\frac{rL_1}{V_1}\Big)^{\eta-1} \frac{(1-q)}{\Gamma(\eta)} e^{-(\lambda+r)L_1/V_1} \quad\text{as }r\to\infty.
\end{align*}
Hence,
\begin{align*}
    \mathbb{P}(\kappa = 1 | \tau < \sigma) \sim \frac{1-q}{q}\left( \frac{L_1V_0}{L_0V_1} \right)^{\eta-1} e^{-(\lambda+r)(L_1/V_1-L_0/V_0)}\quad\text{as }r\to\infty.
\end{align*}
Setting $\eta = 1$ reduces these results to the exponential $\sigma>0$ case, as expected.

We omit the cases of uniformly distributed and sharp $\sigma>0$ since for sufficiently large $r>0$ the search process will never complete.  

\subsection{Numerical methods}
\subsubsection{Diffusion with drift in one dimension}
In Figure \ref{fig1} we illustrate conditional hitting probabilities corresponding to diffusive search in one dimension between targets at $x=0$ and $x=\ell$ with $x_0\in(0,\ell/2)$. The searcher has diffusivity $D>0$ and experiences a drift of magnitude $\mu\in\mathbb{R}$, and $\sigma>0$ is exponentially distributed with rate $r>0$. We compare the asymptotic hitting probabilities for $r\gg 1$ to the hitting probabilities for any $r>0$. The details of quadrature are as follows:

First suppose $\mu=0$. In this case, the exact conditional hitting probability for exponentially distributed $\sigma>0$ with rate $r>0$ is known \cite{aPal_2019},
\begin{align} \label{exact}
    \mathbb{P}(\kappa = 1 | \tau<\sigma) = \frac{\sinh(x_0\sqrt{r/D})}{\sinh(x_0\sqrt{r/D}) + \sinh((\ell-x_0)\sqrt{r/D})}.
\end{align}
Thus, the circular markers in Figure \ref{fig1} are given precisely by equation \eqref{exact}.

Now suppose $\mu\neq 0$ and consider the unconstrained search process. The probability density for hitting the left boundary is \cite{navarro2009}
\begin{align} \label{f0}
    f_{\tau}^{(0)}(t) := \frac{\textup{d}}{\textup{d}t} F_{\tau}^{(0)}(t) = \text{exp}\Big( -\frac{\mu x_0}{2D} -\frac{\mu^2t}{4D} \Big) \frac{D}{\ell^2} \phi\Big(\frac{Dt}{\ell^2},\frac{x_0}{\ell}\Big)
\end{align}
where
\begin{align} \label{phi}
    \phi(s,w) := \begin{cases}
        \sum_{k=1}^{\infty} e^{-k^2\pi^2s}2k\pi \sin(k\pi w),\\
        \frac{1}{\sqrt{4\pi s^3}} \sum_{k=-\infty} ^{\infty} (w+2k)e^{-(w+2k)^2/4s}.
    \end{cases}
\end{align}
The expressions for $\phi$ in \eqref{phi} are equivalent, but the top expression converges quickly for large $s$ while the bottom expression converges quickly for small $s$.

By symmetry, the probability density for hitting the right boundary is
\begin{align*}
    f_{\tau}^{(1)}(t) := \frac{\textup{d}}{\textup{d}t} F_{\tau}^{(1)}(t) = \text{exp}\Big( \frac{\mu (\ell-x_0)}{2D} -\frac{\mu^2t}{4D} \Big) \frac{D}{\ell^2} \phi\Big(\frac{Dt}{\ell^2},1-\frac{x_0}{\ell}\Big).
\end{align*}
The density for hitting either boundary is thus $f_{\tau} = f_{\tau}^{(0)} + f_{\tau}^{(1)}$ and so
\begin{align*}
    F_{\tau} = F_{\tau}^{(0)} + F_{\tau}^{(1)}.
\end{align*}

With these formulae, we use quadrature to approximate the conditional hitting probabilities from the integral representation in \eqref{pdef} and \eqref{pkdef}, which we plot with triangular and square markers in Figure \ref{fig1}. We use the short-time expansion of $\phi$ for $s\leq 1$ and the long-time expansion of $\phi$ for $s>1$. We use $10^4$ terms in each sum for $\phi$ and $10^4$ terms each in the log-spaced time intervals for $s\leq 1$ and $s>1$. We take $D=1$ and $\ell=1$. The same procedure is used to produce Figure \ref{fig3}.

\subsubsection{Diffusion between concentric spheres in three dimensions}
Similar to the case of diffusion in one dimension, here we use quadrature to approximate the conditional hitting probabilities from the integral representation in \eqref{pdef} and \eqref{pkdef} where $\sigma>0$ is exponentially distributed with rate $r>0$. We determine $F_{\tau}$ and $F_{\tau}^{(1)}$ by numerically solving the partial differential equations that they satisfy. 

In particular, the cumulative distribution function of Brownian motion between concentric spheres of radii $R_0>0$ and $R_1>0$ with $R_0<R_1$ and initial position $x_0\in(R_0,R_1)$ satisfies
\begin{align} \label{conc}
    \frac{\partial F_{\tau}}{\partial t} = D\left( \frac{2}{x_0} \frac{\partial F_{\tau}}{\partial x_0} + \frac{\partial^2 F_{\tau}}{\partial x_0^2} \right),\quad x_0\in(R_0,R_1)
\end{align}
with initial condition $F_{\tau} = 0$ at $t=0$ and boundary conditions $F_{\tau} = 1$ at $x_0=R_0$ and $x_0=R_1$. The function $F_{\tau}^{(1)}$ similarly satisfies \eqref{conc} with the same initial condition and boundary conditions $F_{\tau} = 0$ at $x_0=R_0$ and $x_0=R_1$. We numerically solve these boundary value problems using the MATLAB solver \verb+pdepe+ with 2000 linearly spaced radial mesh points between $R_0$ and $R_1$ and 5000 log-spaced time points between $10^{-10}$ and $10$ \cite{MATLAB}. In the left panel of Figure \ref{fig2} we take $D=1$, $R_0=1$, and $R_1=2$.

\subsubsection{Run-and-tumble in one dimension}
In the right panel of Figure \ref{fig2} we illustrate conditional hitting probabilities corresponding to an RTP in one dimension with parameter choices as described in subsection \ref{RTP}. The circular markers in the figure correspond to simulations of the process wherein $q=1/2$, $\ell=1$, $\lambda=2$, and $V_0=V_1=2$. Each markers corresponds to the average of 8000 trials.

\subsubsection{Superdiffusive L\'evy flight in one dimension}
In Figure \ref{levy} we illustrate conditional hitting probabilities corresponding to a superdiffusive L\'evy flight in one dimension as detailed in subsection \ref{superdiff}. We now describe the stochastic simulation algorithm used herein. Given a discrete time step $\Delta t >0$, we generate a statistically exact path of the $(\alpha/2)$-subordinator $U = \{U(t)\}_{t\geq 0}$ on the discrete time grid $\{t_n\}_{n\in \mathbb{N}}$ with $t_n = n\Delta t$ via
\begin{align*}
    U(t_{n+1}) = U(t_n) + (\Delta t)^{2/\alpha} \Theta_n,\quad n\geq 0,
\end{align*}
where $U(t_0) = U(0) = 0$ and $\{\Theta_n\}_{n\in\mathbb{N}}$ is an independent and identically-distributed sequence of realizations
\begin{align*}
    \Theta = \frac{\sin((\alpha/2)(V + \pi/2))}{(\cos V)^{2/\alpha}} \left( \frac{\cos (V - (\alpha/2)(V + \pi/2))}{E} \right)^{(2 - \alpha)/\alpha},
\end{align*}
where $V$ is uniformly distributed on $(-\pi/2,\pi/2)$ and $E$ is an independent unit mean exponential random variable \cite{carnaffan2017}. From this we can generate a statistically exact path of Brownian motion $\{B(s)\}_{s\geq 0}$ on the discrete time grid $\{U(t_n)\}_{n\in\mathbb{N}}$ via
\begin{align*}
    B(U(t_{n+1})) = B(U(t_n)) + \sqrt{2(D_s\Delta t)^{2/\alpha}\Theta_n}\xi_n,\quad n\geq 0,
\end{align*}
where $\{\xi_n\}_{n\in\mathbb{N}}$ is an independent and identically-distributed sequence of standard Gaussian random variables. Finally one obtains a statistically exact path of a L\'evy flight with generalized diffusivity $D_s>0$ via the following random time change of the Brownian motion $B$,
\begin{align*}
    X(t) := D_s^{1/\alpha} B(U(t)),\quad t\geq 0.
\end{align*}

To determine the conditional hitting probabilities, we generate an exponentially distributed resetting time and simulate a L\'evy flight path until the reset time. If the path reaches a target prior to the resetting time, the process ends and we note which target was found. Otherwise we reset the search to its initial position and start the process anew. Repeating this process for a given resetting rate allows us to compute the corresponding conditional hitting probability. Each data point in Figure \ref{levy} is computed from 12000 independent trials with $D_s = 1$ and $\Delta t = 10^{-4}$.

\section*{Acknowledgements}
This material is based upon work supported by the National Science Foundation Graduate Research Fellowship Program under Grant No. 2139322. SDL was supported by the National Science Foundation (Grant Nos. DMS-2325258 and DMS-1944574). Any opinions, findings, and conclusions or recommendations expressed in this material are those of the authors and do not necessarily reflect the views of the National Science Foundation.


\newpage
\bibliography{library.bib}

\begin{thebibliography}{10}

\bibitem{kesten1987}
H~Kesten.
\newblock Hitting probabilities of random walks on $\mathbb{Z}^d$.
\newblock {\em Stochastic Processes and Their Applications}, 25:165--184, 1987.

\bibitem{dalang2009}
RC~Dalang, D~Khoshnevisan, and E~Nualart.
\newblock Hitting probabilities for systems of non-linear stochastic heat equations with multiplicative noise.
\newblock {\em Probability Theory and Relative Fields}, 144:371--427, 2009.

\bibitem{PhysRevE.102.022115}
PC~Bressloff.
\newblock Search processes with stochastic resetting and multiple targets.
\newblock {\em Physical Review E}, 102:022115, 2020.

\bibitem{PhysRevLett.126.100602}
N~Levernier, O~B\'enichou, and R~Voituriez.
\newblock Universality classes of hitting probabilities of jump processes.
\newblock {\em Physical Review Letters}, 126:100602, 2021.

\bibitem{Klinger2022}
J~Klinger, R~Voituriez, and O~B\'enichou.
\newblock Splitting probabilities of symmetric jump processes.
\newblock {\em Physical Review Letters}, 129:140603, 2022.

\bibitem{Bressloff_2020_gatedtargets}
PC~Bressloff.
\newblock Diffusive search for a stochastically-gated target with resetting.
\newblock {\em Journal of Physics A: Mathematical and Theoretical}, 53(42):425001, 2020.

\bibitem{gonzalez2021}
FH~Gonz\'alez, AP~Riascos, and D~Boyer.
\newblock Diffusive transport on networks with stochastic resetting to multiple nodes.
\newblock {\em Physical Review E}, 103:062126, 2021.

\bibitem{schumm2021}
RD~Schumm and PC~Bressloff.
\newblock Search processes with stochastic resetting and partially absorbing targets.
\newblock {\em Journal of Physics A: Mathematical and Theoretical}, 54(40):404004, 2021.

\bibitem{Ray2022}
S~Ray.
\newblock Expediting \textup{F}eller process with stochastic resetting.
\newblock {\em Physical Review E}, 106:034133, 2022.

\bibitem{tucci2022}
G~Tucci, A~Gambassi, SN~Majumdar, and G~Schehr.
\newblock First-passage time of run-and-tumble particles with noninstantaneous resetting.
\newblock {\em Physical Review E}, 106:044127, 2022.

\bibitem{Liu2023}
J~Liu, Y~Hu, and J-D Bao.
\newblock The \textup{L}évy walk with rests under stochastic resetting.
\newblock {\em Journal of Statistical Mechanics: Theory and Experiment}, 2023(7):073202, 2023.

\bibitem{guo2023}
W~Guo, H~Yan, and H~Chen.
\newblock Extremal statistics for a resetting \textup{B}rownian motion before its first passage time.
\newblock {\em Physical Review E}, 108:044115, 2023.

\bibitem{zbik2023levy}
B~Żbik and B~Dybiec.
\newblock L\'evy flights and \textup{L}\'evy walks under stochastic resetting.
\newblock {\em arXiv preprint arXiv:2311.16014}, 2023.

\bibitem{Mercado_2018}
G~Mercado-Vásquez and D~Boyer.
\newblock Lotka–\textup{V}olterra systems with stochastic resetting.
\newblock {\em Journal of Physics A: Mathematical and Theoretical}, 51(40):405601, 2018.

\bibitem{yuste2013}
SB~Yuste, E~Abad, and K~Lindenberg.
\newblock Exploration and trapping of mortal random walkers.
\newblock {\em Physical Review Letters}, 110(22):220603, 2013.

\bibitem{meerson2015}
B~Meerson and S~Redner.
\newblock Mortality, redundancy, and diversity in stochastic search.
\newblock {\em Physical Review Letters}, 114(19):198101, 2015.

\bibitem{grebenkov2017}
DS~Grebenkov and J-F Rupprecht.
\newblock The escape problem for mortal walkers.
\newblock {\em Journal of Chemical Physics}, 146(8):084106, 2017.

\bibitem{radice2023}
M~Radice.
\newblock Effects of mortality on stochastic search processes with resetting.
\newblock {\em Physical Review E}, 107:024136, 2023.

\bibitem{boyer2024optimizing}
D~Boyer, G~Mercado-Vásquez, SN~Majumdar, and G~Schehr.
\newblock Optimizing the random search of a finite-lived target by a \textup{L}\'evy flight.
\newblock {\em Physical Review E}, 109, 2024.

\bibitem{ma2020}
J~Ma, M~Do, MA~Le~Gros, CS~Peskin, CA~Larabell, Y~Mori, and SA~Isaacson.
\newblock Strong intracellular signal inactivation produces sharper and more robust signaling from cell membrane to nucleus.
\newblock {\em PLoS Computational Biology}, 16(11), 2020.

\bibitem{pal2017}
A~Pal and S~Reuveni.
\newblock First passage under restart.
\newblock {\em Physical Review Letters}, 118:030603, 2017.

\bibitem{evans2020}
MR~Evans, SN~Majumdar, and G~Schehr.
\newblock Stochastic resetting and applications.
\newblock {\em Journal of Physics A: Mathematical and Theoretical}, 53(19):193001, 2020.

\bibitem{FPTuFSR}
S~Linn and SD~Lawley.
\newblock First passage times under frequent stochastic resetting.
\newblock {\em Physical Review E}, 108:024114, 2023.

\bibitem{swets1961}
JA~Swets, WP~Jr Tanner, and TG~Birdsall.
\newblock Decision processes in perception.
\newblock {\em Psychological Review}, 68(5):301--340, 1961.

\bibitem{ratcliff1978}
R~Ratcliff.
\newblock A theory of memory retrieval.
\newblock {\em Psychological Review}, 85(2):59, 1978.

\bibitem{newsome1989}
W~Newsome, K~Britten, and J~Movshon.
\newblock Neuronal correlates of a perceptual decision.
\newblock {\em Nature}, 341:52--54, 1989.

\bibitem{banburismus2002}
JI~Gold and MN~Shadlen.
\newblock Banburismus and the brain: Decoding the relationship between sensory stimuli, decisions, and reward.
\newblock {\em Neuron}, 36(2):299--308, 2002.

\bibitem{chittka2003}
L~Chittka, A~Dyer, F~Bock, and A~Dornhaus.
\newblock Bees trade-off foraging speed for accuracy.
\newblock {\em Nature}, 2003.

\bibitem{uchida2003}
N~Uchida and Z~Mainen.
\newblock Speed and accuracy of olfactory discrimination in the rat.
\newblock {\em Nature Neuroscience}, 6:1224--1229, 2003.

\bibitem{gold2007}
JI~Gold and MN~Shadlen.
\newblock The neural basis of decision making.
\newblock {\em Annual Review of Neuroscience}, 30(1):535--574, 2007.

\bibitem{bogacz2006}
R~Bogacz, E~Brown, J~Moehlis, P~Holmes, and JD~Cohen.
\newblock The physics of optimal decision making: a formal analysis of models of performance in two-alternative forced-choice tasks.
\newblock {\em Psychological Review}, 113(4):700--765, 2006.

\bibitem{mulder2012}
MJ~Mulder, E~Wagenmakers, R~Ratcliff, W~Boekel, and BU~Forstmann.
\newblock Bias in the brain: A diffusion model analysis of prior probability and potential payoff.
\newblock {\em Journal of Neuroscience}, 32(7):2335--2343, 2012.

\bibitem{mann2018}
RP~Mann.
\newblock Collective decision making by rational individuals.
\newblock {\em Proceedings of the National Academy of Sciences}, 115(44):E10387--E10396, 2018.

\bibitem{karamched2020prl}
BR~Karamched, M~Stickler, W~Ott, B~Lindner, ZP~Kilpatrick, and K~Josi{\'c}.
\newblock Heterogeneity improves speed and accuracy in social networks.
\newblock {\em Physical Review Letters}, 125(21):218302, 2020.

\bibitem{karamched2020}
BR~Karamched, S~Stolarczyk, ZP~Kilpatrick, and K~Josi\ifmmode~\acute{c}\else \'{c}\fi{}.
\newblock Bayesian evidence accumulation on social networks.
\newblock {\em SIAM Journal on Applied Dynamical Systems}, 19:1884--1919, 2020.

\bibitem{tump2022}
AN~Tump, M~Wolf, P~Romanczuk, and RHJM Kurvers.
\newblock Avoiding costly mistakes in groups: the evolution of error management in collective decision making.
\newblock {\em PLoS Computational Biology}, 18(8), 2002.

\bibitem{reina2023}
A~Reina, T~Bose, V~Srivastava, and JAR Marshall.
\newblock Asynchrony rescues statistically optimal group decisions from information cascades through emergent leaders.
\newblock {\em Royal Society Open Science}, 10(3):230175, 2023.

\bibitem{stickler2023}
M~Stickler, W~Ott, ZP~Kilpatrick, K~Josi\ifmmode~\acute{c}\else \'{c}\fi{}, and BR~Karamched.
\newblock Impact of correlated information on pioneering decisions.
\newblock {\em Physical Review Research}, 5:033020, 2023.

\bibitem{oksendal2003}
B~Oksendal.
\newblock {\em Stochastic {Differential} {Equations}: {An} {Introduction} with {Applications}}.
\newblock Springer, 2003.

\bibitem{gardiner2009}
C~Gardiner.
\newblock {\em {Stochastic Methods: A Handbook for the Natural and Social Sciences}}, volume~13 of {\em Springer Series in Synergetics}.
\newblock Springer Berlin Heidelberg, 4th edition, 2009.

\bibitem{schuss2019}
Z~Schuss, K~Basnayake, and D~Holcman.
\newblock Redundancy principle and the role of extreme statistics in molecular and cellular biology.
\newblock {\em Physics of Life Reviews}, 2019.

\bibitem{lawley2020uni}
SD~Lawley.
\newblock Universal formula for extreme first passage statistics of diffusion.
\newblock {\em Physical Review E}, 101(1):012413, 2020.

\bibitem{Linn_2022}
S~Linn and SD~Lawley.
\newblock Extreme hitting probabilities for diffusion.
\newblock {\em Journal of Physics A: Mathematical and Theoretical}, 55(34):345002, 2022.

\bibitem{lindsay2023}
AE~Lindsay, AJ~Bernoff, and HA~Navarro.
\newblock Short-time diffusive fluxes over membrane receptors yields the direction of a signalling source.
\newblock {\em Royal Society Open Science}, 10, 2023.

\bibitem{redner2001}
S~Redner.
\newblock {\em A guide to first-passage processes}.
\newblock Cambridge University Press, 2001.

\bibitem{durrett2019}
R~Durrett.
\newblock {\em Probability: theory and examples}.
\newblock Cambridge University Press, 2019.

\bibitem{lawley2021super}
SD~Lawley.
\newblock Extreme statistics of superdiffusive \textup{L}évy flights and every other \textup{L}évy subordinate \textup{B}rownian motion.
\newblock {\em Journal of Nonlinear Science}, 33(53), 2023.

\bibitem{lawley2020dist}
SD~Lawley.
\newblock Distribution of extreme first passage times of diffusion.
\newblock {\em Journal of Mathematical Biology}, 80:2301--2325, 2020.

\bibitem{grebenkov2020}
DS~Grebenkov.
\newblock Paradigm shift in diffusion-mediated surface phenomena.
\newblock {\em Physical Review Letters}, 125(7):078102, 2020.

\bibitem{lawley2020networks}
SD~Lawley.
\newblock Extreme first-passage times for random walks on networks.
\newblock {\em Physical Review E}, 102(6):062118, 2020.

\bibitem{lischke2020}
A~Lischke, G~Pang, M~Gulian, F~Song, C~Glusa, X~Zheng, Z~Mao, W~Cai, MM~Meerschaert, M~Ainsworth, et~al.
\newblock What is the fractional {L}aplacian? {A} comparative review with new results.
\newblock {\em Journal of Computational Physics}, 404:109009, 2020.

\bibitem{palyulin2019}
VV~Palyulin, G~Blackburn, MA~Lomholt, NW~Watkins, R~Metzler, R~Klages, and AV~Chechkin.
\newblock First passage and first hitting times of {L}{\'e}vy flights and {L}{\'e}vy walks.
\newblock {\em New Journal of Physics}, 21(10):103028, 2019.

\bibitem{paul_RTP}
PC~Bressloff.
\newblock Occupation time of a run-and-tumble particle with resetting.
\newblock {\em Physical Review E}, 102, 2020.

\bibitem{KS_RTP}
KS~Olsen.
\newblock Steady-state moments under resetting to a distribution.
\newblock {\em Physical Review E}, 108, 2023.

\bibitem{Tucci_RTP}
G~Tucci, A~Gambassi, SN~Majumdar, and G~Schehr.
\newblock First-passage time of run-and-tumble particles with noninstantaneous resetting.
\newblock {\em Physical Review E}, 106, 2022.

\bibitem{acemoglu2011}
D~Acemoglu, MA~Dahleh, I~Lobel, and A~Ozdaglar.
\newblock Bayesian learning in social networks.
\newblock {\em The Review of Economic Studies}, 78(4):1201--1236, 2011.

\bibitem{linn2024fast}
S~Linn, SD~Lawley, BR~Karamched, ZP~Kilpatrick, and K~Josi{\'c}.
\newblock Fast decisions reflect biases, slow decisions do not.
\newblock {\em arXiv preprint arXiv:2401.00306v2}, 2024.

\bibitem{busemeyer1993}
JR~Busemeyer and JT~Townsend.
\newblock Decision field theory: a dynamic-cognitive approach to decision making in an uncertain environment.
\newblock {\em Psychological Review}, 100(3):432--459, 1993.

\bibitem{lawley2021mortal}
SD~Lawley.
\newblock The effects of fast inactivation on conditional first passage times of mortal diffusive searchers.
\newblock {\em SIAM Journal on Applied Mathematics}, 81(1):1--24, 2021.

\bibitem{Bhat2016}
U~Bhat, C~De~Bacco, and S~Redner.
\newblock Stochastic search with \textup{P}oisson and deterministic resetting.
\newblock {\em Journal of Statistical Mechanics: Theory and Experiment}, 2016(8):083401, 2016.

\bibitem{talfriedman2020}
O~Tal-Friedman, A~Pal, A~Sekhon, S~Reuveni, and Y~Roichman.
\newblock Experimental realization of diffusion with stochastic resetting.
\newblock {\em The Journal of Physical Chemistry Letters}, 11(17), 2020.

\bibitem{debruyne2023}
B~De~Bruyne and F~Mori.
\newblock Resetting in stochastic optimal control.
\newblock {\em Physical Review Research}, 5:013122, 2023.

\bibitem{Sunil2023}
JC~Sunil, RA~Blythe, MR~Evans, and SN~Majumdar.
\newblock The cost of stochastic resetting.
\newblock {\em Journal of Physics A: Mathematical and Theoretical}, 56(39):395001, 2023.

\bibitem{lawley2021pdmp}
SD~Lawley.
\newblock Extreme first passage times of piecewise deterministic \textup{M}arkov processes.
\newblock {\em Nonlinearity}, 34(5):2750, 2021.

\bibitem{aPal_2019}
A~Pal and VV~Prasad.
\newblock First passage under stochastic resetting in an interval.
\newblock {\em Physical Review E}, 99, 2019.

\bibitem{navarro2009}
DJ~Navarro and IG~Fuss.
\newblock Fast and accurate calculations for first-passage times in \textup{W}iener diffusion models.
\newblock {\em Journal of Mathematical Psychology}, 53(4):222--230, 2009.

\bibitem{MATLAB}
MATLAB.
\newblock {\em Version 9.15 (R2023b)}.
\newblock The MathWorks Inc., Natick, Massachusetts, 2023.

\bibitem{carnaffan2017}
S~Carnaffan and R~Kawai.
\newblock Solving multidimensional fractional {F}okker--{P}lanck equations via unbiased density formulas for anomalous diffusion processes.
\newblock {\em SIAM Journal on Scientific Computing}, 39(5):B886--B915, 2017.

\end{thebibliography}
\bibliographystyle{unsrt}

\end{document}